\title{\mbox{Probabilistic Analysis} \mbox{of Edge Elimination} \mbox{for Euclidean TSP}}
\author{Xianghui Zhong\\
\small Research Institute for Discrete Mathematics\\
\small University of Bonn\\
\small Lenn\'estr.~2, 53113 Bonn, Germany\\[5mm]
            }
\date{\today} 
\newtheorem{thm}{Theorem}[section]
\theoremstyle{plain}
\newtheorem{lemma}[thm]{Lemma}
\newtheorem*{lemmanonumber}{Lemma}
\newtheorem{theorem}[thm]{Theorem}
\newtheorem{corollary}[thm]{Corollary}
\theoremstyle{remark}
\newtheorem{rem}[thm]{Remark}
\theoremstyle{definition}
\newtheorem{definition}[thm]{Definition}
\newcommand*{\R}{\ensuremath{\mathbb{R}}}
\newcommand*{\N}{\ensuremath{\mathbb{N}}}
\newcommand*{\E}{\ensuremath{\mathbb{E}}}
\DeclareMathOperator{\dist}{dist}
\DeclareMathOperator{\Cert}{Cert}
\begin{document}

\maketitle
\begin{abstract}
One way to speed up the calculation of optimal TSP tours in practice is eliminating edges that are certainly not in the optimal tour as a preprocessing step. In order to do so several edge elimination approaches have been proposed in the past. In this work we investigate two of them in the scenario where the input consists of $n$ independently distributed random points in the 2-dimensional unit square with  density function bounded from above and below by arbitrary positive constants. We show that after the edge elimination procedure of Hougardy and Schroeder the expected number of remaining edges is $\Theta(n)$, while after that the non-recursive part of Jonker and Volgenant the expected number of remaining edges is $\Theta(n^2)$.
\end{abstract}

{\small\textbf{keywords:} traveling salesman problem; Euclidean TSP; probabilistic analysis; edge elimination; preprocessing}

\section{Introduction}
The traveling salesman problem (TSP) is probably the best-known problem in discrete optimization. An instance consists of the pairwise distances between $n$ vertices and the task is to find a shortest Hamilton cycle, i.e.\ a tour visiting every vertex exactly once. The problem is known to be NP-hard \cite{TSPNPHARD}. A special case of the \textsc{TSP} is the \textsc{Euclidean TSP}. Here the vertices are points in the Euclidean plane and distances are given by the $l_2$ norm. This \textsc{TSP} variant is still NP-hard \cite{euclideannpp, euclideannpg}. 

Since the problem is NP-hard, a polynomial-time algorithm is not expected to exist. In order to speed up the calculation of the optimal tour in practice, Jonker and Volgenant developed criteria to find edges that are not contained in any optimal solution, so-called useless edges \cite{Jonker}. The geometrical arguments are based on the following idea: If every tour containing an edge $e$ can be made shorter by replacing at most two edges with two other edges, then $e$ is useless. Thirty years later, Hougardy and Schroeder detect a class of useless edges by showing that every tour containing a specific edge can be shortened by replacing at most three edges \cite{Hougardy}. They are able to give conditions to find useless edges that can be checked efficiently in theory and practice. The algorithm has been tested on instances of TSPLIB, a library for TSP instances \cite{TSPLIB}. In experiments, $30n$ edges remained on average after the execution of the algorithm and the total computation time was improved significantly \cite{Hougardy}.

Although neither elimination procedure is restricted to the 2-dimensional Euclidean case, this paper only focuses on this special case. Both \cite{Hougardy} and \cite{Jonker} contain in addition to the geometric criteria a recursive algorithm for edge elimination. Initially applying the geometric elimination criterion reduces the number of possible edges. Subsequent iterations, combined with other heuristics, will further eliminate additional edges. These parts will not be analyzed in this paper due to complexity.

There have been several probabilistic analyses of algorithms for the TSP. For example the analysis of the 2-Opt Heuristics for TSP by Englert, Röglin and Vöcking \cite{Englert2014}. Starting from an initial tour the 2-Opt Heuristic replaces two edges with two other edges to shorten the tour until it is not possible anymore. In their model each vertex of the TSP instance is a random variable distributed in the $d$ dimensional unit cube by a given probability density function $f_i:[0,1]^d \to [0,\phi]$ bounded from above by a constant $1\leq \phi< \infty$ and they show that for the Euclidean and Manhattan metric the number of expected steps until a local optimum is reached is polynomial in the number of vertices and $\phi$. Moreover, the approximation ratio is bounded by $ O(\sqrt[d]{\phi})$ for all $p$-norms. For $\phi=1$ the input corresponds to uniformly distributed random instances, while for $\phi=\infty$ it corresponds to the worst-case instances. For a suitable choice of the distribution function, worst-case instances perturbed by random noise can also be described by the model. 

In our model we consider $n$ points in the 2-dimensional unit square distributed independently random with densities bounded from above and below by arbitrary positive constants $\phi$ and $\psi$. One obvious density function that satisfies this condition is the uniform distribution on the unit square, with another natural example being the truncated normal distribution on the unit square.

\noindent\textbf{New results.}
In this paper we evaluate the edge elimination criteria of \cite{Hougardy} and \cite{Jonker} on random instances. We show that the expected number of edges that remain after the edge elimination procedure of \cite{Hougardy} is $\Theta(n)$ while the expected number after the procedure in \cite{Jonker} is $\Theta(n^2)$. 
Note that instances with $\Theta(n)$ edges are still NP-hard, as the reduction shown in \cite{euclideannpp} creates instances with $\Theta(n)$ non-useless edges. Nonetheless, this shows that there is a practical preprocessing algorithm that eliminates all but $\Theta(n)$ edges on random instances. 

\noindent\textbf{Structure of the paper.}
After we first describe our model and notation for this paper, we give an outline of the key ideas of the edge elimination procedures from \cite{Hougardy} and \cite{Jonker} and our results.
Then, in the analysis for \cite{Hougardy}, we will develop a modified edge elimination criterion to estimate the probability that an edge can be deleted by the original criterion. For the new criterion we introduce a new construction: a prescribed set of distinct, disk-shaped regions, called test regions, that is a function of the edges. An edge is useless if one of its test regions satisfy a certain condition. We get an estimate on the probability that none of the regions fulfills the edge elimination condition. Using this estimate, it is possible to show that the expected number of remaining edges after the edge elimination procedure is asymptotically linear. In the second part, we investigate the same questions for the procedure described in \cite{Jonker}. For every edge $pq$ and a different vertex $r$, a hyperbola-shaped region is constructed. The edge elimination criterion presented there detects a useless edge if there are no vertices other than $p,q,r$ in the region. We bound the area of this region and show that it contains another vertex with constant probability which then implies that the expected number of the remaining edges is asymptotically quadratic.

\subsection{Model and Notations}
For a Lebesgue measurable set $C\subseteq [0,1]^2$ let the Lebesgue measure of $C$ be denoted by $\lambda(C)$. Moreover, for two points $x,y\in \R^2$ let $\dist(x,y)$ be the Euclidean distance between $x$ and $y$. Similarly, if $x\in \R^2$ and $g$ is a line in the Euclidean plane, let $\dist(x,g)$ be the Euclidean distance of $x$ to the line $g$. For an event $E$ let $E^c$ denote the complementary event of $E$.

\begin{definition}
An instance of the 2-dimensional \textsc{Euclidean TSP} consists of a set of vertices $\{v_1,\dots,v_{n}\}\subset \R^2$. Consider the weighted complete graph $K_n$ with $V(K_n)=\{v_1,\dots,v_{n}\}$, where the cost of an edge $c_{pq}$ is given by $\dist(p,q)$. The 2-dimensional \textsc{Euclidean TSP} asks to find a Hamiltonian cycle, i.e., a cycle that visits every vertex exactly once, of minimal length.
\end{definition}

Due to linearity of the expected value, it is enough to consider a generic edge $pq$. 

\begin{definition}
Given $\psi,\phi\in \R$ with $0<\psi,\phi$, a distribution $d$ on $[0,1]^2$ is said to be \emph{$\psi$-$\phi$-bounded} if $\psi \leq d(x)\leq \phi$ for all $x\in [0,1]^2$.
\end{definition}

\begin{definition}
Given positive constants $\psi$ and $\phi$ consider independent random variables $(U_i)_{i \in \N}$ with $\psi$-$\phi$-bounded density functions $(d_i)_{i\in \N}$ on $[0,1]^2$. Define the random variable 
\begin{align*}
U^n\coloneqq  (U_1,U_2,\dots, U_n).
\end{align*}
The random variable $U^n$ takes values in $([0,1]^2)^n$ and defines a \textsc{Euclidean TSP} instance with $n$ independently distributed vertices in $[0,1]^2$.
\end{definition}

\begin{definition} \label{coredefi}
Let the random variable $X_n^e$ take the value 1 if a specific edge $e$ in the instance represented by $U^n$ remains after the edge elimination process by Hougardy and Schroeder, and 0 otherwise. Let $X_n\coloneqq \sum_{e\in E(K_n)}X_n^e$ be the total number of edges that cannot be deleted on the instance represented by $U^n$. Define $Y_n^e$ and $Y_n$ analogously for the edge elimination process of Jonker and Volgenant.
\end{definition}

Using this notation our main question becomes to calculate $\E\left[X_n\right]$ and $\E\left[Y_n\right]$. 

\section{Outline of the Paper}
In this section we outline the key ideas of the previous edge elimination procedures and the probabilistic analysis of Hougardy and Schroeder.

\subsection{Key Ideas of the Edge Elimination Procedures}
We briefly describe the non-recursive part of the edge elimination procedures of Hougardy and Schroeder \cite{Hougardy} and Jonker and Volgenant \cite{Jonker}. For both edge elimination precedures we fix an optimal tour and assume that an edge $pq$ is in the optimal tour.

\subsubsection{Jonker and Volgenant}
In the paper of Jonker and Volgenant different criteria to detect useless edges were presented. Note that we only analyze the purely geometric part, so we only briefly summarize that part of the edge elimination procedure, which is based on the 2-Opt heuristic. Assume that the edge $pq$ is in the optimal tour $T$ and let $t$ be a neighbor of $r$ in $T$.

The sets
\begin{align*}
I_{p}^{qr}:=\{x\in [0,1]^2 \mid \dist(p,q)+\dist(r,x) \leq \dist(p,r) + \dist(x,q)\} \\
I_{q}^{pr}:=\{x\in [0,1]^2 \mid \dist(p,q)+\dist(r,x) \leq \dist(p,x) + \dist(r,q)\}
\end{align*}
define branches of hyperbolas with $q$, $r$ and $p$, $r$ as foci through $p$ and $q$, respectively (Figure \ref{neighbor of r}). By definition, if $t$ does not lie in $I_{p}^{qr}\cup I_{q}^{pr}$, we can replace $pq$ and $rt$ by either $pr$ and $qt$ or $pt$ and $rq$ to decrease the total cost. It is well known that one of the replacements results in a new tour, contradicting the optimality of $T$ (Figure \ref{2 opt move}).

\begin{figure}[ht]
\centering
 \scalebox{.7}{\definecolor{zzttqq}{rgb}{0.6,0.2,0.}
\begin{tikzpicture}[line cap=round,line join=round,>=triangle 45,x=1.0cm,y=1.0cm]
\clip(-5.423674715724752,-6.905123510391891) rectangle (10.299658249545129,1.6198371014833837);
%\fill[color=zzttqq,fill=zzttqq,fill opacity=0.1] (3.60309331586412,-2.30827564418066) -- (-6.172268152447803,24.13825994955595) -- (-26.633575565272643,-18.132778425386586) -- cycle;
%\fill[color=zzttqq,fill=zzttqq,fill opacity=0.1] (3.60309331586412,-2.30827564418066) -- (18.82955262947165,22.387185518565794) -- (42.57981226987465,-27.86723295031607) -- cycle;
\draw (-4.,-4.)-- (8.,-4.);
%\draw(3.60309331586412,-2.30827564418066) circle (2.0288184634873354cm);
%\draw (-1.7048781222791787,-2.751379946952663) node[anchor=north west] {$R_p(\delta_r)$};
%\draw (6.058925292107195,-2.59914850745489) node[anchor=north west] {$R_q(\delta_r)$};
%\draw [color=zzttqq] (3.60309331586412,-2.30827564418066)-- (-6.172268152447803,24.13825994955595);
%\draw [color=zzttqq] (-6.172268152447803,24.13825994955595)-- (-26.633575565272643,-18.132778425386586);
%\draw [color=zzttqq] (-26.633575565272643,-18.132778425386586)-- (3.60309331586412,-2.30827564418066);
%\draw [color=zzttqq] (3.60309331586412,-2.30827564418066)-- (18.82955262947165,22.387185518565794);
%\draw [color=zzttqq] (18.82955262947165,22.387185518565794)-- (42.57981226987465,-27.86723295031607);
%\draw [color=zzttqq] (42.57981226987465,-27.86723295031607)-- (3.60309331586412,-2.30827564418066);
\draw [color=zzttqq,fill=zzttqq,fill opacity=0.1, samples=50,domain=-0.99:0.99,rotate around={12.544232507963537:(-0.19845334206794057,-3.15413782209033)},xshift=-0.19845334206794057cm,yshift=-3.15413782209033cm] plot ({3.6444363898914167*(1+(\x)^2)/(1-(\x)^2)},{1.373070725231489*2*(\x)/(1-(\x)^2)});
%\draw [samples=50,domain=-0.99:0.99,rotate around={12.544232507963537:(-0.19845334206794057,-3.15413782209033)},xshift=-0.19845334206794057cm,yshift=-3.15413782209033cm] plot ({3.6444363898914167*(-1-(\x)^2)/(1-(\x)^2)},{1.373070725231489*(-2)*(\x)/(1-(\x)^2)});
%\draw [samples=50,domain=-0.99:0.99,rotate around={-21.044394373873807:(5.801546657932056,-3.154137822090328)},xshift=5.801546657932056cm,yshift=-3.154137822090328cm] plot ({2.1054859332064875*(1+(\x)^2)/(1-(\x)^2)},{1.0562238902511936*2*(\x)/(1-(\x)^2)});
\draw [color=zzttqq,fill=zzttqq,fill opacity=0.1, samples=50,domain=-0.99:0.99,rotate around={-21.044394373873807:(5.801546657932056,-3.154137822090328)},xshift=5.801546657932056cm,yshift=-3.154137822090328cm] plot ({2.1054859332064875*(-1-(\x)^2)/(1-(\x)^2)},{1.0562238902511936*(-2)*(\x)/(1-(\x)^2)});
\draw (-1.7048781222791787,-2.751379946952663) node[anchor=north west] {$I_p^{qr}$};
\draw (6.058925292107195,-2.59914850745489) node[anchor=north west] {$I_q^{pr}$};
\begin{scriptsize}
\draw [fill=black] (-4.,-4.) circle (1.5pt);
\draw[color=black] (-3.8578656237476685,-3.6756422581891406) node {$p$};
\draw [fill=black] (8.,-4.) circle (1.5pt);
\draw[color=black] (8.14667074807664,-3.6756422581891406) node {$q$};
\draw [fill=black] (3.60309331586412,-2.30827564418066) circle (1.5pt);
\draw[color=black] (3.753706351140932,-2.00109642371364) node {$r$};
\draw [fill=black] (-6.172268152447803,24.13825994955595) circle (2.5pt);
\draw[color=black] (-5.358432670225707,1.8046895637306792) node {$E$};
\draw [fill=black] (-26.633575565272643,-18.132778425386586) circle (2.5pt);
\draw[color=black] (-5.358432670225707,1.8046895637306792) node {$G$};
\draw [fill=black] (18.82955262947165,22.387185518565794) circle (2.5pt);
\draw[color=black] (-5.358432670225707,1.8046895637306792) node {$H$};
\draw [fill=black] (42.57981226987465,-27.86723295031607) circle (2.5pt);
\draw[color=black] (-5.358432670225707,1.8046895637306792) node {$I$};
\end{scriptsize}
\end{tikzpicture}}
 \caption{For every edge $pq$ and vertex $r\not \in\{p,q\}$ we construct the sets $I_{p}^{qr}$ and $I_{q}^{pr}$. If $pq$ is in the optimal tour, the neighbors of $r$ in the optimal tour have to lie in $I_{p}^{qr} \cup I_{q}^{pr}$.}
 \label{neighbor of r}
\end{figure}

\begin{figure}[ht]
  \centering
   \definecolor{qqffqq}{rgb}{0,1,0}
\definecolor{qqqqff}{rgb}{0,0,1}
\definecolor{ffqqqq}{rgb}{1,0,0}
\begin{tikzpicture}[line cap=round,line join=round,>=triangle 45,x=1.0cm,y=1.0cm]
\draw [color=ffqqqq] (1,1)-- (7,1);
\draw [color=ffqqqq] (3.32,3.48)-- (6.3,4.22);
\draw [color=qqqqff] (1,1)-- (6.3,4.22);
\draw [color=qqqqff] (3.32,3.48)-- (7,1);
\draw [color=qqffqq] (7,1)-- (6.3,4.22);
\draw [color=qqffqq] (3.32,3.48)-- (1,1);
\begin{scriptsize}
\fill [color=black] (1,1) circle (1.5pt);
\draw[color=black] (0.96,1.36) node {$p$};
\fill [color=black] (7,1) circle (1.5pt);
\draw[color=black] (7.16,1.26) node {$q$};
\fill [color=black] (3.32,3.48) circle (1.5pt);
\draw[color=black] (3.44,3.74) node {$r$};
\fill [color=black] (6.3,4.22) circle (1.5pt);
\draw[color=black] (6.5,4.22) node {$t$};
\end{scriptsize}
\end{tikzpicture}
   \caption{If $pq$ and $rt$ are part of a tour, then either replacing the two edges by $pr$ and $qt$ or $pt$ and $rq$ will result in a new tour.}
    \label{2 opt move}
\end{figure}

One neighbor $t$ of $r$ in the optimal tour has to be different from $p$ and $q$ (assuming that we have more than three vertices). Thus, we conclude the edge elimination criterion: 

\begin{theorem}[Non-recursive version of Theorem 1 in \cite{Jonker}] \label{Jonker}
  An edge $pq$ can be eliminated if there is a vertex $r$ such that $I_{p}^{qr}\cup I_{q}^{pr}$ does not contain any other vertex than $p,q,r$.
\end{theorem}

\subsubsection{Hougardy and Schroeder}
An edge elimination procedure for rounded Euclidean instances was described in Hougardy and Schroeder \cite{Hougardy}. Different from the paper we are using exact Euclidean distances. Note that the arguments from \cite{Hougardy} also apply in the exact Euclidean case with slight modifications. We present the results with these modifications.

Given an edge $pq$, a $\delta>0$ and a vertex $x\neq p,q$ the cone $R_p^{qx}(\delta)$ constructed in \cite{Hougardy} plays a central role in the edge elimination procedure. It can be interpreted in the following way if the boundary of $I_p^{qx}$ defined in the previous subsection intersects the circle with radius $\delta$ around $x$ twice: We call the two intersection points $x^p$ and $x^p_2$, where $x^p$ is the point with larger distance to $p$. $R_p^{qx}(\delta)$ is the cone with center $x$ passing through $x^p$ and $x^p_2$ (Figure \ref{hyperbola to cones}). Thus, it contains $I_p^{qx}$ outside of the disk with radius $\delta$ around $r$. The formal definition is given by:

\begin{definition}[Section 3 in \cite{Hougardy}] \label{cool cones}
Given a vertex $r$, for another point $t$ let $t_r(\delta)$ be the intersection of the ray (semi-infinite line) $\overrightarrow{rt}$ with the circle around $r$ with radius $\delta$. For an edge $pq$ we define two cones at $r$: $R_p^{qr}(\delta)\coloneqq \{t\in\R^2\mid \dist(q,t_r(\delta)) \geq \delta+\dist(p,q)-\dist(p,r)\}$ and $R_q^{pr}(\delta)\coloneqq \{t\in\R^2\mid \dist(p,t_r(\delta)) \geq \delta+\dist(p,q)-\dist(q,r)\}$.
\end{definition}

The edge elimination procedure of Hougardy and Schroeder reads as follows:
\begin{figure}[ht]
  \centering
   \scalebox{.7}{\definecolor{zzttqq}{rgb}{0.6,0.2,0.}
\begin{tikzpicture}[line cap=round,line join=round,>=triangle 45,x=1.0cm,y=1.0cm]
\clip(-5.423674715724752,-6.905123510391891) rectangle (10.299658249545129,1.6198371014833837);
\fill[color=zzttqq,fill=zzttqq,fill opacity=0.1] (3.60309331586412,-2.30827564418066) -- (-6.172268152447803,24.13825994955595) -- (-26.633575565272643,-18.132778425386586) -- cycle;
\fill[color=zzttqq,fill=zzttqq,fill opacity=0.1] (3.60309331586412,-2.30827564418066) -- (18.82955262947165,22.387185518565794) -- (42.57981226987465,-27.86723295031607) -- cycle;
\draw (-4.,-4.)-- (8.,-4.);
\draw(3.60309331586412,-2.30827564418066) circle (2.0288184634873354cm);
\draw (-1.7048781222791787,-2.751379946952663) node[anchor=north west] {$R_p^{qx}(\delta)$};
\draw (6.058925292107195,-2.59914850745489) node[anchor=north west] {$R_q^{px}(\delta)$};
\draw [color=zzttqq] (3.60309331586412,-2.30827564418066)-- (-6.172268152447803,24.13825994955595);
\draw [color=zzttqq] (-6.172268152447803,24.13825994955595)-- (-26.633575565272643,-18.132778425386586);
\draw [color=zzttqq] (-26.633575565272643,-18.132778425386586)-- (3.60309331586412,-2.30827564418066);
\draw [color=zzttqq] (3.60309331586412,-2.30827564418066)-- (18.82955262947165,22.387185518565794);
\draw [color=zzttqq] (18.82955262947165,22.387185518565794)-- (42.57981226987465,-27.86723295031607);
\draw [color=zzttqq] (42.57981226987465,-27.86723295031607)-- (3.60309331586412,-2.30827564418066);
\draw [samples=50,domain=-0.99:0.99,rotate around={12.544232507963537:(-0.19845334206794057,-3.15413782209033)},xshift=-0.19845334206794057cm,yshift=-3.15413782209033cm] plot ({3.6444363898914167*(1+(\x)^2)/(1-(\x)^2)},{1.373070725231489*2*(\x)/(1-(\x)^2)});
%\draw [samples=50,domain=-0.99:0.99,rotate around={12.544232507963537:(-0.19845334206794057,-3.15413782209033)},xshift=-0.19845334206794057cm,yshift=-3.15413782209033cm] plot ({3.6444363898914167*(-1-(\x)^2)/(1-(\x)^2)},{1.373070725231489*(-2)*(\x)/(1-(\x)^2)});
%\draw [samples=50,domain=-0.99:0.99,rotate around={-21.044394373873807:(5.801546657932056,-3.154137822090328)},xshift=5.801546657932056cm,yshift=-3.154137822090328cm] plot ({2.1054859332064875*(1+(\x)^2)/(1-(\x)^2)},{1.0562238902511936*2*(\x)/(1-(\x)^2)});
\draw [samples=50,domain=-0.99:0.99,rotate around={-21.044394373873807:(5.801546657932056,-3.154137822090328)},xshift=5.801546657932056cm,yshift=-3.154137822090328cm] plot ({2.1054859332064875*(-1-(\x)^2)/(1-(\x)^2)},{1.0562238902511936*(-2)*(\x)/(1-(\x)^2)});
\draw [style=dashed] (3.60309331586412,-2.30827564418066) -- (3.60309331586412,-2.30827564418066-2.0288184634873354);
\begin{scriptsize}
\draw [fill=black] (-4.,-4.) circle (1.5pt);
\draw[color=black] (-3.8578656237476685,-3.6756422581891406) node {$p$};
\draw [fill=black] (8.,-4.) circle (1.5pt);
\draw[color=black] (8.14667074807664,-3.6756422581891406) node {$q$};
\draw [fill=black] (3.60309331586412,-2.30827564418066) circle (1.5pt);
\draw[color=black] (3.703706351140932,-2.00109642371364) node {$x$};
\draw [fill=black] (-6.172268152447803,24.13825994955595) circle (1.5pt);
\draw [fill=black] (2.8997,-0.40529) circle (1.5pt);
\draw [color=black] (3.1000,-0.20529) node {$x^p$};
\draw [fill=black] (1.80557,-3.24902) circle (1.5pt);
\draw [color=black] (2.10557,-3.37902) node {$x^p_2$};
\draw [color=black] (3.75309331586412,-2.30827564418066-2.0288184634873354/2) node {$\delta$};
%\draw [fill=black] (5.29967,-3.42081) circle (1.5pt);
%\draw [fill=black] (4.66788,-0.58133) circle (1.5pt);
\end{scriptsize}
\end{tikzpicture}}
    \caption{Given an edge $pq$, a $\delta>0$ and a vertex $x\neq p,q$ the cones $R_p^{qx}(\delta)$ and $R_q^{px}(\delta)$ that are constructed from $I_p^{qx}$ and $I_q^{px}$ and the circle with radius $\delta$ around $x$.}
    \label{hyperbola to cones}
  \end{figure}

\begin{theorem}[Main Edge Elimination (for $\delta=\delta_r=\delta_s$), Theorem 3 \& Lemma 12 in \cite{Hougardy}] \label{main elimination}
An edge $pq$ can be eliminated if there exists a pair of vertices $r,s$ disjoint from $p,q$ and a $\delta>0$ that satisfy the following four conditions:
\begin{enumerate}
\item There are no other vertices in each of the disks with radius $\delta$ around $r$ and $s$.
\item The cone $R_p^{qr}(\delta)$ does not contain both neighbors of $r$ in the optimal tour. The analogous statement holds for $R_q^{pr}(\delta)$ and $s$ with the corresponding cones $R_p^{qs}(\delta)$ and $R_q^{ps}(\delta)$.

\item We have $r\not \in R_p^{qs}(\delta) \cup R_q^{ps}(\delta)$ and $s\not \in R_p^{qr}(\delta) \cup R_q^{pr}(\delta)$.

\item The following two inequalities are satisfied (Figure \ref{worst position for the neighbors of rs})
\begin{align*}
\dist(r^p,p)+\dist(r,s)+\dist(s^q,q) &< \dist(r^p,r)+\dist(p,q)+\dist(s, s^q)\\
\dist(s^p,p)+\dist(r,s)+\dist(r^q,q) &< \dist(s^p,s)+\dist(p,q)+\dist(r, r^q).
\end{align*}
\end{enumerate}
\end{theorem}

\begin{minipage}[t]{0.45\textwidth}
 \scalebox{.6}{\definecolor{qqqqff}{rgb}{0,0,1}
\definecolor{ffqqqq}{rgb}{1,0,0}
\begin{tikzpicture}[line cap=round,line join=round,>=triangle 45,x=1.0cm,y=1.0cm]
\clip(-0.45,-5.19) rectangle (12.14,4.25);
\draw [color=ffqqqq] (0,0)-- (10,0);
\draw(4.51,1.41) circle (1.11cm);
\draw(7.04,-1.95) circle (1.11cm);
\draw [domain=-0.44933811931284895:4.511813823716732] plot(\x,{(-5.22--1.03*\x)/-0.41});
\draw [domain=-0.44933811931284895:4.511813823716732] plot(\x,{(--1.98-0.7*\x)/-0.85});
\draw [domain=7.041326548629379:12.13977335467574] plot(\x,{(--12.46-1.88*\x)/0.41});
\draw [domain=7.041326548629379:12.13977335467574] plot(\x,{(-12.46--1.4*\x)/1.33});
\draw [color=qqqqff] (4.51,1.41)-- (7.04,-1.95);
\draw [color=qqqqff] (0,0)-- (4.09,2.43);
\draw [color=qqqqff] (10,0)-- (7.28,-3.03);
\draw [color=ffqqqq] (4.51,1.41)-- (4.09,2.43);
\draw [color=ffqqqq] (7.46,-3.84)-- (7.04,-1.95);
\draw (4.16,3.15) node[anchor=north west] {$r^p$};
\draw (7.42,-2.80) node[anchor=north west] {$s^q$};
\begin{scriptsize}
\fill [color=black] (0,0) circle (1.5pt);
\draw[color=black] (0,0.34) node {$p$};
\fill [color=black] (10,0) circle (1.5pt);
\draw[color=black] (10.14,0.25) node {$q$};
\fill [color=black] (4.51,1.41) circle (1.5pt);
\draw[color=black] (4.63,1.64) node {$r$};
\fill [color=black] (7.04,-1.95) circle (1.5pt);
\draw[color=black] (7.04,-1.58) node {$s$};
\fill [color=black] (4.09,2.43) circle (1.5pt);
\fill [color=black] (7.28,-3.03) circle (1.5pt);
\end{scriptsize}
\end{tikzpicture}}
 \captionof{figure}{Condition (4) states that the blue edges are in total shorter than the red edges.}
  \label{worst position for the neighbors of rs}
\end{minipage}
\hspace{1cm}
\begin{minipage}[t]{0.45\textwidth}
 \scalebox{.6}{\definecolor{qqqqff}{rgb}{0,0,1}
\definecolor{ffqqqq}{rgb}{1,0,0}
\begin{tikzpicture}[line cap=round,line join=round,>=triangle 45,x=1.0cm,y=1.0cm]
\clip(-0.45,-5.47) rectangle (12.12,4.25);
\draw [color=ffqqqq] (0,0)-- (10,0);
\draw(4.51,1.41) circle (1.11cm);
\draw(7.04,-1.95) circle (1.11cm);
\draw [domain=-0.44933811931284895:4.511813823716732] plot(\x,{(-5.22--1.03*\x)/-0.41});
\draw [domain=-0.44933811931284895:4.511813823716732] plot(\x,{(--1.98-0.7*\x)/-0.85});
\draw [domain=7.041326548629379:12.121095147741038] plot(\x,{(--12.46-1.88*\x)/0.41});
\draw [domain=7.041326548629379:12.121095147741038] plot(\x,{(-12.46--1.4*\x)/1.33});
\draw [color=ffqqqq] (2.43,3.53)-- (4.51,1.41);
\draw [color=ffqqqq] (7.04,-1.95)-- (10.59,-4.5);
\draw [color=qqqqff] (0,0)-- (2.43,3.53);
\draw [color=qqqqff] (10.59,-4.5)-- (10,0);
\draw [color=qqqqff] (4.51,1.41)-- (7.04,-1.95);
\begin{scriptsize}
\fill [color=black] (0,0) circle (1.5pt);
\draw[color=black] (0,0.34) node {$p$};
\fill [color=black] (10,0) circle (1.5pt);
\draw[color=black] (10.14,0.25) node {$q$};
\fill [color=black] (4.51,1.41) circle (1.5pt);
\draw[color=black] (4.63,1.64) node {$r$};
\fill [color=black] (7.04,-1.95) circle (1.5pt);
\draw[color=black] (7.04,-1.58) node {$s$};
\fill [color=black] (2.43,3.53) circle (1.5pt);
\draw[color=black] (2.43,3.73) node {$x$};
\fill [color=qqqqff] (10.59,-4.5) circle (1.5pt);
\draw[color=black] (10.89,-4.4) node {$y$};
\end{scriptsize}
\end{tikzpicture}}
  \captionof{figure}{The four conditions ensure that the blue edges are in total shorter than the red edges.}
  \label{two potential points}
\end{minipage}

Assume that the edge $pq$ is in the optimal tour $T$ and we have a pair of vertices $r$ and $s$ satisfying the four conditions. We already know from the previous section that the neighours of $r$ in $T$ have to lie in $I_{q}^{pr} \cup I_{p}^{qr}$. By the first condition, the disk around $r$ does not contain any other vertices. Since the union of the cones $R_p^{qr}(\delta) \cup R_q^{pr}(\delta)$ contains $I_{q}^{pr} \cup I_{p}^{qr}$ outside of the disk with radius $\delta$ around $r$, the neighbors of $r$ has to lie in $R_p^{qr}(\delta) \cup R_q^{pr}(\delta)$.

By the second condition, exactly one of the two neighbors lie in each of $R_p^{qr}(\delta)$ and $R_q^{pr}(\delta)$. Similar statements hold for $s$. 

Let $x$ be the neighbor of $r$ lying in $R_p^{qr}(\delta)$ and $y$ be the neighbor of $s$ lying in $R_q^{ps}(\delta)$ in the optimal tour. By the third statement $r$ and $s$ are not neighbors in the optimal tour $T$, i.e.\ we have $x\neq s$ and $y \neq r$. We want to replace the edges $xp, rs, qy$ by $xr, pq, sy$. (Figure \ref{two potential points}). It can be shown that under the four conditions the term $\dist(x,p) + \dist(r,s) + \dist(q,y) - \dist(x,r) - \dist(p,q) - \dist(s,y)$ is maximal if $x=r^p$ and $y=s^q$. But in this case condition (4) ensures that this term is still negative, i.e.\ after the replacement the total cost decreases. We can use the same argument to show that the symmetrical replacement using the other neighbors of $r$ and $s$ lying in $R_q^{pr}(\delta)$ and $R_p^{qs}(\delta)$ also reduces the total cost. Moreover, one of these two replacements results into a tour, which is a contraction to the optimality of the tour $T$. Thus, the edge $pq$ cannot be in the optimal tour.

\subsection{Probabilistic Analysis of Hougardy and Schroeder}
For the probabilistic analysis of Hougardy and Schroeder, we first develop a new criterion for detecting useless edges. The new criterion detects fewer useless edges than the original, but it makes the analysis easier. The crucial property is that all edges that will be deleted by the new criterion will be deleted by the original criterion as well. In the first part of the analysis our goal is to describe the new criteron and to prove this property.

\subsubsection{A Modified Criterion}
From now on, we fix $\delta \coloneqq \frac{1}{\sqrt{n}}$ and abbreviate $R_p^{qr}(\delta)$ and $R_q^{ps}(\delta)$ by $R_p^{qr}$ and $R_q^{ps}$. A \emph{test region} $T$ for an edge $pq$ is a disk $D$ with radius $2\delta$ aligned similar to the Figure~\ref{outline figure test region}. A test region contains two smaller parts that are marked pink in the figure. That one with smaller distance to $pq$ is called the \emph{lower test subregion} and the other the \emph{upper test subregion}. 

\begin{figure}[ht]
\centering
 \definecolor{zzttqq}{rgb}{0.6,0.2,0}
\begin{tikzpicture}[line cap=round,line join=round,>=triangle 45,x=1.0cm,y=1.0cm]
\draw (4.,-4.)-- (18.,-4.);
\draw [fill=zzttqq, fill opacity=0.1,shift={(12.163913483825889,-2.9029201124954596)}] plot[domain=0.5235987755982929:2.6179938779915,variable=\t]({1.*1.0970798875045185*cos(\t r)+0.*1.0970798875045185*sin(\t r)},{0.*1.0970798875045185*cos(\t r)+1.*1.0970798875045185*sin(\t r)});
\draw (11.213814431265998,-2.354380168743206)-- (13.11401253638578,-2.354380168743206);
\draw [fill=zzttqq, fill opacity=0.1, shift={(12.163913483825889,-2.9029201124954596)}] plot[domain=3.665191429188086:5.759586531581293,variable=\t]({1.*1.0970798875045187*cos(\t r)+0.*1.0970798875045187*sin(\t r)},{0.*1.0970798875045187*cos(\t r)+1.*1.0970798875045187*sin(\t r)});
\draw (11.213814431265998,-3.4514600562477136)-- (13.11401253638578,-3.4514600562477136);
\draw[dashed] (11.613814431265998,-2.354380168743206) -- (11.613814431265998,-3.4514600562477136);
\draw[dashed] (12.163913483825889,-2.9029201124954596) -- (14.3580732588 ,-2.9029201124954596);
\draw[dashed] (12.163913483825889,-2.9029201124954596) -- (12.163913483825889, -1.8058402250084041);
\draw(12.163913483825889,-2.9029201124954596) circle (2.194159775009081cm);
\begin{scriptsize}
\draw [fill=black] (4.,-4.) circle (1.5pt);
\draw[color=black] (4.0924016743771965,-3.756902590628255) node {$p$};
\draw [fill=black] (18.,-4.) circle (1.5pt);
\draw[color=black] (18.10046909794259,-3.756902590628255) node {$q$};
%\draw [fill=black] (7.5,-4.) circle (1.5pt);
%\draw [fill=black] (14.5,-4.) circle (1.5pt);
%\draw [fill=black] (12.163913483825889,-2.9029201124954596) circle (1.5pt);
%\draw[color=black] (12.260330860389269,-2.654989715618196) node {$c$};
\draw[color=black] (13.2609933713,-2.7029201124954596) node {$2\delta$};
\draw[color=black] (11.413814431265998,-2.9029201124954598) node {$\delta$};
\draw[color=black] (12.363913483825889,-2.628650140623695725) node {$\delta$};
\end{scriptsize}
\end{tikzpicture}
  \caption{The test region for the edge $pq$ has the form of a disk with radius $2\delta$. The corresponding upper and lower test subregions are marked pink in the figure.}
  \label{outline figure test region}
\end{figure}
The detailed construction will be described later in the main part. The test regions and test subregions have an area of $\Theta(\frac{1}{n})$, hence the probability that each of them does not contain a vertex is asymptotically constant. The common shape of the test subregions is chosen such that: any two vertices $r$ in the upper test subregion and $s$ in the lower test subregion have at least distance $\delta$ but at most $2\delta$ to each other.

We call a test region \emph{certifying} if there is a vertex $r$ in the upper and a vertex $s$ in the lower test subregion and the test region only contains the two vertices $r$ and $s$.

Next, we call edges of length greater than $c\delta=c\frac{1}{\sqrt{n}}$ \emph{long} edges for some constant $c$. Since the probability of an edge not being long is of order $\Theta(\frac{1}{n})$ it is enough to bound the probability of long edges in order to show that only a linear number of edges remain in expectation. We want to show that if a test region for a long edge is certifying, then the edge $pq$ will be eliminated by the edge elimination procedure of Hougardy and Schroeder.
Assume that a certifying test region is given. By definition, there are vertices $r$ and $s$ in the upper and lower test subregion, respectively.  For $r,s$ to satisfy the edge elimination condition for the edge $pq$ recall that four conditions must be satisfied.

For the first condition note that by the certifying property and the shape of the test region each of the disks with radius $\delta$ around $r$ and $s$ does not contain any other vertex. To exclude that $r$ lies in the disk around $s$ and vice versa we exploit the shape of test subregions that ensures the distance of the upper and lower test subregion to be at least $\delta$. 

For the second condition we show that for long edges $pq$ the angles of the cones will be small. Now, if the two neighbors $x,y$ of $r$ in the optimal tour lie in the same cone, then the angle of $\angle xry$ is also small. Together with the shape of the test region and the certifying property, we ensure that replacing the edges $xr, yr, pq$ by $pr, qr, xy$ gives a shorter tour contradicting the optimality (Figure \ref{near neighbors}).

\begin{figure}[ht]
\centering
 \definecolor{qqqqff}{rgb}{0,0,1}
\definecolor{ffqqqq}{rgb}{1,0,0}
\begin{tikzpicture}[line cap=round,line join=round,>=triangle 45,x=1.0cm,y=1.0cm]
\draw [color=ffqqqq] (1.16,-1.98)-- (10,-2);
\draw [color=ffqqqq] (5.48,0.42)-- (3.2,2.78);
\draw [color=ffqqqq] (5.48,0.42)-- (4.16,2.62);
\draw [color=qqqqff] (3.2,2.78)-- (4.16,2.62);
\draw [color=qqqqff] (1.16,-1.98)-- (5.48,0.42);
\draw [color=qqqqff] (5.48,0.42)-- (10,-2);
\begin{scriptsize}
\fill [color=black] (1.16,-1.98) circle (1.5pt);
\draw[color=black] (1.2,-1.63) node {$p$};
\fill [color=black] (10,-2) circle (1.5pt);
\draw[color=black] (10.11,-1.64) node {$q$};
\fill [color=black] (5.48,0.42) circle (1.5pt);
\draw[color=black] (5.59,0.65) node {$r$};
\fill [color=black] (3.2,2.78) circle (1.5pt);
\draw[color=black] (3.31,3.01) node {$x$};
\fill [color=black] (4.16,2.62) circle (1.5pt);
\draw[color=black] (4.27,2.85) node {$y$};
\end{scriptsize}
\end{tikzpicture}
  \caption{Let $pq$, $xr$ and $yr$ be part of the optimal tour. If the angle $\angle xry$ is small, then the shape of the test region and the certifying property ensure that the red edges are in total shorter than the blue edges.}
  \label{near neighbors}
\end{figure}

For the third condition a straightforward calculation shows that $p\in R_p^{qs}$. We show that for long edges $pq$ the angle $\angle rsp$ is larger than the angles of the cone $R_p^{qs}$. This implies that $r$ does not lie in $R_p^{qs}$. Similarly, we show that $r$ does not lie in $R_q^{ps}$.

The last condition can be proven using geometrical properties of the shape of the test region and test subregion.

\begin{figure}[ht]
\centering
 \begin{tikzpicture}[line cap=round,line join=round,>=triangle 45,x=1.0cm,y=1.0cm]
\draw (0.,-3.)-- (14.,-3.);
\draw [shift={(4.6797346193001115,-2.468999167977505)}] plot[domain=0.523598775598309:2.617993877991484,variable=\t]({1.*0.531000832022491*cos(\t r)+0.*0.531000832022491*sin(\t r)},{0.*0.531000832022491*cos(\t r)+1.*0.531000832022491*sin(\t r)});
\draw (4.219874409337963,-2.2034987519662548)-- (5.1395948292622595,-2.2034987519662548);
\draw [shift={(4.6797346193001115,-2.468999167977505)}] plot[domain=3.665191429188102:5.759586531581277,variable=\t]({1.*0.531000832022491*cos(\t r)+0.*0.531000832022491*sin(\t r)},{0.*0.531000832022491*cos(\t r)+1.*0.531000832022491*sin(\t r)});
\draw (4.219874409337963,-2.734499583988755)-- (5.1395948292622595,-2.734499583988755);
\draw(4.6797346193001115,-2.468999167977505) circle (1.062001664044982cm);
\draw [shift={(9.345709248860153,-2.4523349728719332)}] plot[domain=0.523598775598306:2.6179938779914886,variable=\t]({1.*0.5310008320224903*cos(\t r)+0.*0.5310008320224903*sin(\t r)},{0.*0.5310008320224903*cos(\t r)+1.*0.5310008320224903*sin(\t r)});
\draw (8.885849038898003,-2.186834556860685)-- (9.805569458822301,-2.186834556860685);
\draw [shift={(9.345709248860153,-2.4523349728719332)}] plot[domain=3.6651914291880976:5.75958653158128,variable=\t]({1.*0.5310008320224917*cos(\t r)+0.*0.5310008320224917*sin(\t r)},{0.*0.5310008320224917*cos(\t r)+1.*0.5310008320224917*sin(\t r)});
\draw (8.885849038898003,-2.7178353888831817)-- (9.805569458822301,-2.7178353888831817);
\draw(9.345709248860153,-2.4523349728719332) circle (1.0620016640449834cm);
\draw [shift={(7.0127219340801314,-2.4523349728719337)}] plot[domain=0.5235987755982808:2.6179938779915126,variable=\t]({1.*0.5310008320224947*cos(\t r)+0.*0.5310008320224947*sin(\t r)},{0.*0.5310008320224947*cos(\t r)+1.*0.5310008320224947*sin(\t r)});
\draw (6.552861724117973,-2.1868345568606946)-- (7.47258214404229,-2.1868345568606946);
\draw [shift={(7.0127219340801314,-2.4523349728719337)}] plot[domain=3.6651914291880723:5.759586531581307,variable=\t]({1.*0.5310008320224944*cos(\t r)+0.*0.5310008320224944*sin(\t r)},{0.*0.5310008320224944*cos(\t r)+1.*0.5310008320224944*sin(\t r)});
\draw (6.552861724117973,-2.717835388883172)-- (7.47258214404229,-2.717835388883172);
\draw(7.0127219340801314,-2.4523349728719337) circle (1.0620016640449896cm);
%\draw (3.5,-3.)-- (3.5,-1.406997503932524);
%\draw (3.5,-1.406997503932524)-- (10.5,-1.406997503932524);
%\draw (10.5,-1.406997503932524)-- (10.5,-3.);
\begin{scriptsize}
\draw [fill=black] (0.,-3.) circle (1.5pt);
\draw[color=black] (0.11374516037349916,-2.7022978994555076) node {$p$};
\draw [fill=black] (14.,-3.) circle (1.5pt);
\draw[color=black] (14.111669049053624,-2.7022978994555076) node {$q$};
%\draw [fill=uuuuuu] (3.5,-3.) circle (1.5pt);
%\draw [fill=uuuuuu] (10.5,-3.) circle (1.5pt);
%\draw [fill=black] (4.6797346193001115,-2.468999167977505) circle (1.5pt);
%\draw[color=black] (4.963025936094828,-2.45685607132933) node {$c_1$};
%\draw [fill=uuuuuu] (5.833333333333334,-3.) circle (1.5pt);
%\draw [fill=uuuuuu] (8.166666666666668,-3.) circle (1.5pt);
%\draw [fill=black] (9.345709248860153,-2.4523349728719332) circle (1.5pt);
%\draw[color=black] (9.62900056565487,-2.459021412027361) node {$c_3$};
%\draw [fill=black] (7.0127219340801314,-2.4523349728719337) circle %(1.5pt);
%\draw[color=black] (7.296013250874849,-2.459021412027361) node {$c_2$};
\end{scriptsize}
\end{tikzpicture}
  \caption{Multiple test regions are placed on one side of the edge $pq$}
  \label{outline figure test area}
\end{figure}

Now, we want to define for each edge a certain number of test regions. The practical algorithm in \cite{Hougardy} checks the edge elimination condition for 10 pairs of heuristically chosen $r$ and $s$. We consider the general setting of checking a variable number of pairs. We will show later that checking at most $\omega(\log(n)) \cap o(n)$ pairs is sufficient to conclude that the expected value of remaining edges is linear.

For every long edge $pq$ we place a set $\mathcal{T}=\{T_1, \dots, T_{\lvert \mathcal{T} \rvert} \}$ of test regions on one side of the edge (Figure~\ref{outline figure test area}). It can be shown that we can always choose a side of $pq$ such that each of the test subregions of the placed test regions lies completely inside of the unit square.

Altogether, we conclude our modified edge elimination criterion: A long edge can be eliminated if one of test regions $T_i$ we just placed is certifying. 

\subsubsection{Probabilistic Analysis}
After obtaining the new criterion the next step is to bound the probability that it cannot eliminate an edge, i.e.\ no constructed test region is certifying.
There are two disjoint cases that the $i$th test region is not certifying: 
\begin{itemize}
\item The event $A_i$: the upper or lower test subregion of $T_i$ is empty.
\item The event $B_i$: the upper and lower test subregion of $T_i$ each contains a vertex, but the test region contains at least three vertices.
\end{itemize}
We start by bounding the probability that a given subset of the test regions is not certifying because of $A$ and the remaining test regions are not certifying because of $B$. For that we first check the $A$ events one after another, each conditional on ealier $A$ events. Then, we check the $B$ events one after another, each conditional on all $A$ events and earlier $B$ events. That means we can rewrite this probability by the product of conditional probabilities of the form $\Pr[A_i \mid \land_{j\in J} A_j]$ with $\Pr[B_i \mid \left(\land_{j\in J} A_j \right) \land \left(\land_{k\in K} B_k \right)]$ for some appropriate index sets $J, K$ with $i\not \in J, K$. 
Let us first consider the conditional probability $\Pr[A_i\mid \land_{j\in J} A_j]$. Note that the event $\land_{j\in J} A_j$ means that certain upper or lower test subregions of test regions other than $T_i$ are empty. This only decreases the chance of $A_i$, i.e.\ that the upper or lower test subregion of $T_i$ is empty. Therefore,  we have $\Pr[A_i\mid \land_{j\in J} A_j]\leq \Pr[A_i]$.

Now for the conditional probability $\Pr[B_i \mid (\land_{j\in J} A_j) \land (\land_{k\in K} B_k)]$ note that for some index $j$ both $A_j$ and $B_j$ describe some distribution of vertices in $T_j$. Thus the event $(\land_{j\in J} A_j) \land (\land_{k\in K} B_k)$ influences only the distribution of vertices in $\mathcal{T}\backslash T_i$. Recall that the $B_i$ means that there is a vertex in the upper and lower subregion of $T_i$ and at least three vertices in the test region $T_i$. Hence, $B_i$ is most likely if $\mathcal{T}\backslash{T_i}$ contains no other vertices, since this increases the probability that vertices lie in $T_i$. Therefore, we get $\Pr[B_i \mid (\land_{j\in J} A_j) \land (\land_{k\in K} B_k)]\leq \Pr[B_i \mid \mathcal{T}\backslash T_i\text{ contains no vertices}]$. 

This gives us an estimate for the probability that a given subset of the test regions are not certifying because of $A$ and the remaining test regions are not certifying because of $B$. By summing up all possible subsets we can bound the probability that no test region is certifying by $\Pr[X_n^{pq}=1]\leq (\Pr[A_i]+\Pr[B_i \mid \mathcal{T}\backslash T_i\text{ contains no vertices}])^{\lvert \mathcal{T} \rvert}$. 

Next, we want to bound the probability that a test region is not certifying by a constant $s<1$. Note that the areas of a test region and a test subregion are proportional to $\delta^2=\frac{1}{n}$. Recall that we denote the Lebesgue measure of a set $C$ by $\lambda(C)$. By limiting the number of constructed test regions by $\lvert \mathcal{T} \rvert \in o(n)$, we ensure that $\lim_{n\to\infty} \lambda(\mathcal{T}\backslash T_i) =\lim_{n\to\infty} \lvert \mathcal{T}\backslash T_i \rvert \lambda(T_i)= \lim_{n\to \infty}(\lvert \mathcal{T} \rvert-1)\Theta\left(\frac{1}{n}\right)= 0$. This indicates due to the bounded densities that $\lim_{n\to\infty}\Pr[B_i \mid \mathcal{T} \backslash T_i\text{ contains no vertices}] = \lim_{n\to \infty}\Pr[B_i]$. Moreover, the probability that each of the the test subregions contain exactly one vertex and there are no other vertices in the test region, i.e.\ the test region is certifying, is in $\Theta\left((1-\frac{1}{n})^n\right)=\Theta\left(1\right)$. Thus, we have for the complement event $\limsup_{n\to \infty}\Pr[A_i]+\Pr[B_i]<1$. Altogether, we have 
\begin{align*}
\limsup_{n\to \infty}\Pr[A_i]+\Pr[B_i \mid \mathcal{T}\backslash T_i \text{ contains no vertices}]=
\limsup_{n\to \infty}\Pr[A_i]+\Pr[B_i]<s<1
\end{align*} for some constant $s$. Hence, the probability that none of $\lvert \mathcal{T} \rvert$ test regions are certifying can be bounded by $\Pr[X_n^{pq}=1] \leq s^{\lvert \mathcal{T} \rvert}$.

Now the probability that $pq$ remains is:
\begin{align*}
\Pr[X_n^{pq}=1]=\sum_{l} \Pr[X_n^{pq}=1 \mid \lvert \mathcal{T} \rvert=l]\Pr[\lvert \mathcal{T} \rvert=l].
\end{align*}
We already have $\Pr[X_n^{pq}=1 \mid \lvert \mathcal{T} \rvert=l] \leq s^l$. Recall that all vertices are randomly distributed with a $\psi$-$\phi$-bounded density function for some constants $\psi$ and $\phi$. It can be shown that for an edge $pq$ formed by two random vertices $p$ and $q$ we have $\Pr[\lvert \mathcal{T} \rvert=l] \in O\left(\delta^2\phi\right)=O\left(\frac{1}{n}\right)$. Hence, $n\Pr[X_n^{pq}=1] \in O\left(n\sum_{l}s^l\frac{1}{n}\right)=O\left(\sum_{l}s^l\right)=O\left(1\right)$ as $s<1$. Thus, we conclude:
\begin{align*}
\limsup_{n\to \infty} \frac{\E[X_n]}{n}= \limsup_{n\to \infty} \frac{(n-1)\Pr[X_n^{pq}=1]}{2}<\infty.
\end{align*}
Therefore, $\E[X_n]\in O(n)$. On the other hand, we have $\E[X_n]\in \Omega(n)$, since an optimal \textsc{TSP} tour exists for every instance and consists of $n$ edges which cannot be eliminated. Therefore, we conclude $\E[X_n]\in \Theta(n)$.

\section{Probabilistic Analysis of Jonker and Volgenant}
In this section we analyze the expected number of edges that are not eliminated by the criterion of Jonker and Volgenant. We consider a fixed edge $pq$. Moreover, we assume that $r$ is a vertex other than $p$ or $q$. 

Recall that the edge elimination procedure of Jonker and Volgenant eliminates the edge $pq$ if for another vertex $r$ the set $I_q^{pr}\cup I_p^{qr}$ does not contain any vertex other than $p,q$ and $r$. Our goal is to get a bound on the area of $I_q^{pr}\cup I_p^{qr}$ and to show that it is likely that they contain vertices other than $p,q$ and $r$. This leads to the result that the expected number of remaining edges is quadratic.

Let $0<\alpha<\frac{1}{2}$ and $0<2\alpha<\beta<1$ be two fixed parameters whose values we will determine later. Let $B_{pq}$ be the event where $p$ lies in the square $\{(x,y)\mid \frac{1-\beta}{2} \leq x \leq \frac{1-\beta}{2}+\alpha, \frac{1-\alpha}{2} \leq y \leq \frac{1+\alpha}{2} \}$ and $q$ lies in the square $\{(x,y)\mid \frac{1+\beta}{2}-\alpha \leq x \leq \frac{1+\beta}{2}, \frac{1-\alpha}{2} \leq y \leq \frac{1+\alpha}{2}\}$ (Figure~\ref{pqbox}). By the choice $2\alpha<\beta$, the two squares are disjoint. If $\alpha =0$ and $B_{pq}$, then the line $pq$ is the horizontal line $y=\frac{1}{2}$. Therefore, for fixed $\beta$ we can choose $\alpha>0$ small enough such that if $B_{pq}$ happens, then for all possible positions of $p$ and $q$ the line $pq$ intersects the interior of the left and right border of the unit square. 

We say that $r$ is \emph{centrally located} if $\frac{1-\beta}{2} \leq r_x \leq \frac{1+\beta}{2}$, where $r_x$ is the $x$-coordinate of $r$.

\begin{figure}[h]
\begin{minipage}[t]{0.45\textwidth}
  \center
\begin{tikzpicture}[line cap=round,line join=round,>=triangle 45,x=5cm,y=5cm]
\draw [line width=1pt] (0,1)-- (0,0);
\draw [line width=1pt] (0,0)-- (1,0);
\draw [line width=1pt] (1,0)-- (1,1);
\draw [line width=1pt] (1,1)-- (0,1);
\draw [line width=1pt] (0.2,0.45)-- (0.2,0.55);
\draw [line width=1pt] (0.3,0.45)-- (0.3,0.55);
\draw [line width=1pt] (0.2,0.45)-- (0.3,0.45);
\draw [line width=1pt] (0.2,0.55)-- (0.3,0.55);

\draw [line width=1pt] (0.7,0.45)-- (0.7,0.55);
\draw [line width=1pt] (0.8,0.45)-- (0.8,0.55);
\draw [line width=1pt] (0.7,0.45)-- (0.8,0.45);
\draw [line width=1pt] (0.7,0.55)-- (0.8,0.55);

\draw [line width=1pt, dashed] (0.0,0.45)-- (0.2,0.45);
\draw [line width=1pt, dashed] (0.8,0.45)-- (1.0,0.45);

\draw[color=black, below] (0.1,0.45) node {$\frac{1-\beta}{2}$};
\draw[color=black, below] (0.9,0.45) node {$\frac{1-\beta}{2}$};

\draw [line width=1pt, dashed] (0.3,0.45)-- (0.7,0.45);

\draw[color=black, below] (0.5,0.45) node {$\beta-2\alpha$};

\draw [line width=1pt] (0.24451704383848787,0.48534839111861154)-- (0.7633789319839351,0.50611149081180348);
\begin{scriptsize}
\draw [fill=black] (0.24451704383848787,0.48534839111861154) circle (1.5pt);
\draw[color=black, above] (0.24451704383848787,0.48534839111861154) node {$p$};
\draw [fill=black] (0.7633789319839351,0.50611149081180348) circle (1.5pt);
\draw[color=black, below] (0.7633789319839351,0.50611149081180348) node {$q$};
\draw[color=black, left] (0.2,0.5) node {$\alpha$};
\draw[color=black, above] (0.25,0.55) node {$\alpha$};
\end{scriptsize}
\end{tikzpicture}
\caption{The event $B_{pq}$: $p$ and $q$ each lies in a square with side length $\alpha$. Moreover, $\alpha$ is chosen small enough such that if $B_{pq}$ happens, then the line $pq$ intersects the interior of the left and right border of the unit square for all possible positions of $p,q$.}
\label{pqbox}
\end{minipage}\qquad% 
\begin{minipage}[t]{0.45\textwidth}
  \center
  \definecolor{zzttqq}{rgb}{0.6,0.2,0.}
  \begin{tikzpicture}[line cap=round,line join=round,>=triangle 45,x=5cm,y=5cm]
  \fill[line width=1pt,color=zzttqq,fill=zzttqq,fill opacity=0.10000000149011612] (0,0.45754984676955407) -- (0.25867234810055695,0.47597123456042034) -- (0.41966987886582324,0.627826751553825) -- (0,0.7729120613226795) -- cycle;
  \fill[line width=1pt,color=zzttqq,fill=zzttqq,fill opacity=0.10000000149011612] (0.41966987886582324,0.627826751553825) -- (0.7563947476050682,0.5114166055125129) -- (1,0.5287649879554417) -- (1,1) -- (0.8142487206692763,1) -- cycle;
  \draw [line width=1pt,color=black] (0,1)-- (0,0);
  \draw [line width=1pt,color=black] (0,0)-- (1,0);
  \draw [line width=1pt,color=black] (1,0)-- (1,1);
  \draw [line width=1pt,color=black] (1,1)-- (0,1);
  \draw [line width=1pt,color=black] (0.25867234810055695,0.47597123456042034)-- (0.7563947476050682,0.5114166055125129);
  \draw [line width=1pt,color=black] (0.2,0.55)-- (0.2,0.45);
  \draw [line width=1pt,color=black] (0.2,0.45)-- (0.3,0.45);
  \draw [line width=1pt,color=black] (0.3,0.45)-- (0.3,0.55);
  \draw [line width=1pt,color=black] (0.3,0.55)-- (0.2,0.55);
  \draw [line width=1pt,color=black] (0.7,0.55)-- (0.7,0.45);
  \draw [line width=1pt,color=black] (0.7,0.45)-- (0.8,0.45);
  \draw [line width=1pt,color=black] (0.8,0.45)-- (0.8,0.55);
  \draw [line width=1pt,color=black] (0.8,0.55)-- (0.7,0.55);
  \draw [line width=1pt,color=black] (0.7563947476050682,0.5114166055125129)-- (0,0.7729120613226795);
  \draw [line width=1pt,color=black] (0.25867234810055695,0.47597123456042034)-- (0.8142487206692763,1);
  \draw [line width=1pt,color=black] (0,0.45754984676955407)-- (1,0.5287649879554417);
  \draw [line width=1pt,color=black] (0.15,0.6)--(0.7563947476050682,0.5114166055125129);
  \begin{scriptsize}
  \draw [fill=black] (0.15,0.6) circle (1.5pt);
  \draw[color=black, above] (0.15,0.6) node {$x$};
  \draw [fill=black] (0.25867234810055695,0.47597123456042034) circle (1.5pt);
  \draw[color=black, above] (0.25867234810055695,0.47597123456042034) node {$p$};
  \draw [fill=black] (0.7563947476050682,0.5114166055125129) circle (1.5pt);
  \draw[color=black, below] (0.7563947476050682,0.5114166055125129) node {$q$};
  \draw [fill=black] (0.41966987886582324,0.627826751553825) circle (1.5pt);
  \draw[color=black, above] (0.41966987886582324,0.627826751553825) node {$r$};
  \draw [fill=black] (0,0.7729120613226795) circle (1.5pt);
  \draw[color=black, left] (0,0.7729120613226795) node {$\widetilde{q}$};
  \draw [fill=black] (0.8142487206692763,1) circle (1.5pt);
  \draw[color=black, above] (0.8142487206692763,1) node {$\widetilde{p}$};
  \draw [fill=black] (0,0.45754984676955407) circle (1.5pt);
  \draw[color=black, left] (0,0.45754984676955407) node {$u$};
  \draw [fill=black] (1,0.5287649879554417) circle (1.5pt);
  \draw[color=black, right] (1,0.5287649879554417) node {$v$};
  \draw[color=black, above] (0.08,0.55) node {$P$};
  \draw[color=black] (0.77,0.7) node {$Q$};
\end{scriptsize}
  \end{tikzpicture}
\caption{Let $x$ be a point in the area $P$ bounded by $up, pr, r\widetilde{q}$ and the border of the unit square. The segments $qx$ and $pr$ cross each other. Similarly, if $x$ is a point in the area $Q$ bounded by $qv, rq, r\widetilde{p}$ and the border of the unit square the segments $px$ and $qr$ cross each other.}
\label{pqbox intersect}
\end{minipage}
\end{figure}

\begin{figure}[h]
\begin{minipage}[t]{0.45\textwidth}
    \center
    \definecolor{zzttqq}{rgb}{0.6,0.2,0.}
    \begin{tikzpicture}[line cap=round,line join=round,>=triangle 45,x=5cm,y=5cm]
    \fill[line width=1pt,color=zzttqq,fill=zzttqq,fill opacity=0.10000000149011612] (0,0.45754984676955407) -- (0.25867234810055695,0.47597123456042034) -- (0.41966987886582324,0.627826751553825) -- (0,0.7729120613226795) -- cycle;
    \fill[line width=1pt,color=zzttqq,fill=zzttqq,fill opacity=0.10000000149011612] (0.41966987886582324,0.627826751553825) -- (0.7563947476050682,0.5114166055125129) -- (1,0.5287649879554417) -- (1,1) -- (0.8142487206692763,1) -- cycle;
    \fill[line width=1pt,color=zzttqq,fill=zzttqq,fill opacity=0.10000000149011612] (0.7563947476050682,0.5114166055125129) -- (1,0.5287649879554417) -- (1,1) -- cycle;
    \draw [line width=1pt,color=black] (0,1)-- (0,0);
    \draw [line width=1pt,color=black] (0,0)-- (1,0);
    \draw [line width=1pt,color=black] (1,0)-- (1,1);
    \draw [line width=1pt,color=black] (1,1)-- (0,1);
    \draw [line width=1pt,color=black] (0.25867234810055695,0.47597123456042034)-- (0.7563947476050682,0.5114166055125129);
    \draw [line width=1pt,color=black] (0.2,0.55)-- (0.2,0.45);
    \draw [line width=1pt,color=black] (0.2,0.45)-- (0.3,0.45);
    \draw [line width=1pt,color=black] (0.3,0.45)-- (0.3,0.55);
    \draw [line width=1pt,color=black] (0.3,0.55)-- (0.2,0.55);
    \draw [line width=1pt,color=black] (0.7,0.55)-- (0.7,0.45);
    \draw [line width=1pt,color=black] (0.7,0.45)-- (0.8,0.45);
    \draw [line width=1pt,color=black] (0.8,0.45)-- (0.8,0.55);
    \draw [line width=1pt,color=black] (0.8,0.55)-- (0.7,0.55);
    \draw [line width=1pt,color=black] (0.7563947476050682,0.5114166055125129)-- (0,0.7729120613226795);
    \draw [line width=1pt,color=black] (0.25867234810055695,0.47597123456042034)-- (0.8142487206692763,1);
    \draw [line width=1pt,color=black] (0,0.45754984676955407)-- (1,0.5287649879554417);
    \draw [line width=1pt,color=black] (0.7563947476050682,0.5114166055125129)-- (1,1);
    \begin{scriptsize}
    \draw [fill=black] (0.25867234810055695,0.47597123456042034) circle (1.5pt);
    \draw[color=black, above] (0.25867234810055695,0.47597123456042034) node {$p$};
    \draw [fill=black] (0.7563947476050682,0.5114166055125129) circle (1.5pt);
    \draw[color=black, below] (0.7563947476050682,0.5114166055125129) node {$q$};
    \draw [fill=black] (0.41966987886582324,0.627826751553825) circle (1.5pt);
    \draw[color=black, above] (0.41966987886582324,0.627826751553825) node {$r$};
    \draw [fill=black] (0,0.7729120613226795) circle (1.5pt);
    \draw[color=black, left] (0,0.7729120613226795) node {$\widetilde{q}$};
    \draw [fill=black] (0.8142487206692763,1) circle (1.5pt);
    \draw[color=black, above] (0.8142487206692763,1) node {$\widetilde{p}$};
    \draw [fill=black] (1,1) circle (2pt);
    \draw[color=black, above] (1,1) node {$t$};
    \draw [fill=black] (0,0.45754984676955407) circle (1.5pt);
    \draw[color=black, left] (0,0.45754984676955407) node {$u$};
    \draw [fill=black] (1,0.5287649879554417) circle (1.5pt);
    \draw[color=black, right] (1,0.5287649879554417) node {$v$};
    \end{scriptsize}
    \end{tikzpicture}
  \caption{If $\widetilde{p}$ lies on the top border of the unit square, the triangle $qvt$ lies in $Q$.}
  \label{pqbox intersect top}
\end{minipage}\qquad% 
\begin{minipage}[t]{0.45\textwidth}
    \center
    \definecolor{zzttqq}{rgb}{0.6,0.2,0.}
    \begin{tikzpicture}[line cap=round,line join=round,>=triangle 45,x=5cm,y=5cm]
    \fill[line width=1pt,color=zzttqq,fill=zzttqq,fill opacity=0.10000000149011612] (0,0.45754984676955407) -- (0.25867234810055695,0.47597123456042034) -- (0.4488449060498531,0.5945692862360038) -- (0,0.7159241111268112) -- cycle;
    \fill[line width=1pt,color=zzttqq,fill=zzttqq,fill opacity=0.10000000149011612] (0.7563947476050682,0.5114166055125129) -- (1,0.5287649879554417) -- (1,0.9382882802823922) -- (0.4488449060498531,0.5945692862360038) -- cycle;
    \fill[line width=1pt,color=zzttqq,fill=zzttqq,fill opacity=0.10000000149011612] (0,0.45754984676955407) -- (0.25867234810055695,0.47597123456042034) -- (0.4488449060498531,0.5945692862360038) -- (0,0.562604732881097) -- cycle;
    \fill[line width=1pt,color=zzttqq,fill=zzttqq,fill opacity=0.10000000149011612] (0.7563947476050682,0.5114166055125129) -- (1,0.5287649879554417) -- (1,0.6338198740669847) -- (0.4488449060498531,0.5945692862360038) -- cycle;
    \draw [line width=1pt,color=black] (0,1)-- (0,0);
    \draw [line width=1pt,color=black] (0,0)-- (1,0);
    \draw [line width=1pt,color=black] (1,0)-- (1,1);
    \draw [line width=1pt,color=black] (1,1)-- (0,1);
    \draw [line width=1pt,color=black] (0.25867234810055695,0.47597123456042034)-- (0.7563947476050682,0.5114166055125129);
    \draw [line width=1pt,color=black] (0.2,0.55)-- (0.2,0.45);
    \draw [line width=1pt,color=black] (0.2,0.45)-- (0.3,0.45);
    \draw [line width=1pt,color=black] (0.3,0.45)-- (0.3,0.55);
    \draw [line width=1pt,color=black] (0.3,0.55)-- (0.2,0.55);
    \draw [line width=1pt,color=black] (0.7,0.55)-- (0.7,0.45);
    \draw [line width=1pt,color=black] (0.7,0.45)-- (0.8,0.45);
    \draw [line width=1pt,color=black] (0.8,0.45)-- (0.8,0.55);
    \draw [line width=1pt,color=black] (0.8,0.55)-- (0.7,0.55);
    \draw [line width=1pt,color=black] (0.7563947476050682,0.5114166055125129)-- (0,0.7159241111268112);
    \draw [line width=1pt,color=black] (0,0.45754984676955407)-- (1,0.5287649879554417);
    \draw [line width=1pt,color=black] (0.25867234810055695,0.47597123456042034)-- (0.4488449060498531,0.5945692862360038);
    \draw [line width=1pt,color=black] (0,0.562604732881097)-- (1,0.6338198740669847);
    \draw [line width=1pt,color=black] (0.4488449060498531,0.5945692862360038)-- (1,0.9382882802823922);
    \begin{scriptsize}
    \draw [fill=black] (0.25867234810055695,0.47597123456042034) circle (1.5pt);
    \draw[color=black, above] (0.25867234810055695,0.47597123456042034) node {$p$};
    \draw [fill=black] (0.7563947476050682,0.5114166055125129) circle (1.5pt);
    \draw[color=black, below] (0.7563947476050682,0.5114166055125129) node {$q$};
    \draw [fill=black] (0.4488449060498531,0.5945692862360038) circle (1.5pt);
    \draw[color=black, above] (0.4488449060498531,0.5945692862360038) node {$r$};
    \draw [fill=black] (0,0.7159241111268112) circle (1.5pt);
    \draw[color=black, left] (0,0.7159241111268112) node {$\widetilde{q}$};
    \draw [fill=black] (0,0.45754984676955407) circle (1.5pt);
    \draw[color=black, left] (0,0.45754984676955407) node {$u$};
    \draw [fill=black] (1,0.5287649879554417) circle (1.5pt);
    \draw[color=black, right] (1,0.5287649879554417) node {$v$};
    \draw [fill=black] (1,0.9382882802823922) circle (1.5pt);
    \draw[color=black, right] (1,0.9382882802823922) node {$\widetilde{p}$};
    \draw [fill=black] (0,0.562604732881097) circle (1.5pt);
    \draw[color=black, left] (0,0.562604732881097) node {$a$};
    \draw [fill=black] (1,0.6338198740669847) circle (1.5pt);
    \draw[color=black, right] (1,0.6338198740669847) node {$b$};
    \end{scriptsize}
    \end{tikzpicture}
  \caption{In the case that $\widetilde{q}$ and $\widetilde{p}$ lie on the left and right border of the unit square: The trapezoids $upra$ and $qvbr$ lie in $P$ and $Q$, respectively.}
  \label{pqbox intersect left right}
\end{minipage}
\end{figure}

\begin{lemma} \label{at least area of improvement cone}
  There is a constant $f$ such that if $B_{pq}$, then $\lambda(I^{pr}_q \cup I^{qr}_p) \geq \min\left(\frac{\dist(r,pq)}{2}, f\right)$. Moreover, if $r$ is not centrally located, then we have $\lambda(I^{pr}_q \cup I^{qr}_p ) \geq f$.
\end{lemma}
  
\begin{proof}
  Let $\widetilde{p}$ and $\widetilde{q}$ be the intersection points of the rays $\overrightarrow{pr}$ and $\overrightarrow{qr}$ with the border of the unit square, respectively. Similarly, let $u$ and $v$ be the intersection points of the rays $\overrightarrow{qp}$ and $\overrightarrow{pq}$ with the border of the unit square, respectively. Let $x$ be a point in the area $P$ bounded by $up, pr, r\widetilde{q}$ and the border of the unit square. Then, by construction the segments $qx$ and $pr$ cross each other (Figure~\ref{pqbox intersect}). By applying the triangle inequality twice, we get $\dist(q,x)+\dist(p,r)\geq\dist(x,r)+\dist(p,q)$. Thus, $P$ is a subset of $I^{qr}_p $. Similarly, the area $Q$ bounded by $qv, rq, r\widetilde{p}$ and the border of the unit square is a subset of $I^{pr}_q$. 

  If $\widetilde{p}$ does not lie on the right border of the unit square, assume w.l.o.g.\ that it lies on the top border of the unit square. Let $t=(1,1)$ be the top right border of the unit square. The triangle with the endpoints $t$, $q$ and $v$ lies completely inside of $Q$ (Figure~\ref{pqbox intersect top}). Recall that we chose $\alpha$ small enough such that if $B_{pq}$, then for all possible positions of $p,q$ the line $pq$ intersects the interior of the right border of the unit square. Hence, there are constants $c_1, c_2>0$ such that $\dist(v,t)\geq c_1$ and $\sin(\angle tvq) \geq c_2$. Therefore the area of the triangle is at least the constant $\frac{1}{2}\dist(q, v)\dist(v,t)\sin(\angle tvq) \geq \frac{1}{2}\frac{1-\beta}{2}c_1c_2=:f$. The analogous statement holds if $\widetilde{q}$ does not lie on the left border of the unit square. In particular, if $r$ is not centrally located one of these two cases has to occur, which shows the second statement.

  In the case that $\widetilde{p}$ lies on the right border and $\widetilde{q}$ lies on the left border of the unit square, consider the parallel line to $pq$ through $r$. Let the intersection point of this parallel line with the left and right border unit square be $a$ and $b$, respectively. Then the trapezoids $upra$ and $qvbr$ lie in $P$ and $Q$, respectively (Figure \ref{pqbox intersect left right}). Their combined area is at least
  \begin{align*} 
    &\frac{\left(\dist(a,r)+\dist(u,p)\right)\dist(r,pq)}{2}+\frac{\left(\dist(r,b)+\dist(q,v)\right)\dist(r,pq)}{2}\\
    = &\frac{\left(\dist(a,b)+\dist(u,p)+\dist(q,v)\right)\dist(r,pq)}{2} \geq \frac{\dist(r,pq)}{2}.
  \end{align*}
\end{proof}

Using the upper bounds on the area of $I^{pr}_q$ and $I^{qr}_p$, we next investigate on the asymptotic behavior of the probability that these areas are empty.

\begin{definition}
Let $\Cert(r)$ be the event that the vertex $r$ fulfills the condition of Theorem \ref{Jonker} and we can therefore eliminate $pq$.
\end{definition}

\begin{lemma} \label{prob area prob}
There is a is a constant $c$ such that
\begin{align*}
\lim_{n\to \infty} n\Pr\left[\Cert(r) \mid B_{pq}\right]& \leq c\beta.\\
\end{align*}
\end{lemma}

\begin{proof}
If $\Cert(r)$ happens, then $I^{pr}_q \cup I^{qr}_p$ does not contain the remaining $n-3$ vertices. W.l.o.g.\ assume that $p=U_{n-2},q=U_{n-1}$ and $r=U_n$. By Lemma \ref{at least area of improvement cone} we can bound $\lambda(I^{pr}_q \cup I^{qr}_p)$ either by $f$ or $\frac{\dist(r,pq)}{2}$. We have $\lambda(I^{pr}_q \cup I^{qr}_p)\geq f$ if $r$ is not centrally located with probability at least $1-\beta\phi$. Thus, there is a $0 \leq y\leq \beta\phi$ such that:
\begin{align*}
  \lim_{n\to \infty}n\Pr\left[\Cert(r) \mid B_{pq} \right]&=\lim_{n\to \infty}n \prod_{i=1}^{n-3}\int_{[0,1]^2\backslash I^{pr}_q \cup I^{qr}_p}d_i\mathrm{d}\lambda=\lim_{n\to \infty}n \prod_{i=1}^{n-3}\left( 1-\int_{I^{pr}_q \cup I^{qr}_p}d_i\mathrm{d}\lambda \right)\\
 &\leq \lim_{n\to \infty}n (1-\lambda(I^{pr}_q \cup I^{qr}_p)\psi)^{n-3}\\
 &\leq \lim_{n\to \infty} \left( n(1-f\psi)^{n-3} \left(1-y\right) + n(1-\frac{1}{2}\dist(r, pq)\psi)^{n-3} y \right)\\
  &\leq \lim_{n\to\infty} n (1-\frac{1}{2}\dist(r, pq)\psi)^{n-3} \beta\phi.
\end{align*}
Now, $\dist(r, pq)$ can vary between 0 and $\sqrt{2}$. For fixed $p,q$ and fixed distance $d$ the position of $r$ satisfying $\dist(r,pq)=d$ is a segment in the unit square with length at most $\sqrt{2}$. Therefore, 
\begin{align*}
  \lim_{n\to \infty}n\Pr\left[\Cert(r) \mid B_{pq} \right]&\leq \beta\phi \lim_{n\to\infty} n\int_{0}^{\sqrt{2}}(1-\frac{1}{2}x\psi)^{n-3} \sqrt{2}\phi \mathrm{d}x \\
   &= \beta\phi^2 \sqrt{2} \lim_{n\to\infty} n \left[-\frac{(1-\frac{1}{2}x\psi)^{n-2}}{\frac{1}{2}\psi(n-2)}\right]_0^{\sqrt{2}}\\
   &= \beta\phi^2 \sqrt{2} \lim_{n\to\infty} \left( \frac{n}{\frac{1}{2}\psi(n-2)}-\frac{n(1-\frac{1}{2}\sqrt{2}\psi)^{n-2}}{\frac{1}{2}\psi(n-2)} \right)=\frac{\beta\phi^2 \sqrt{2}}{\frac{1}{2}\psi}.
\end{align*}
\end{proof}

\begin{theorem} \label{Jonker result}
$\E\left[Y_n\right]\in \Theta(n^2)$
\end{theorem}

\begin{proof}
We have:
\begin{align*}
\E\left[Y_n\right]& = \sum_{e\in E(K_n)} \E\left[Y^e_n\right]=\sum_{e\in E(K_n)} \Pr\left[Y^e_n=1\right]=\binom{n}{2} \Pr\left[Y^{pq}_n=1\right]\\
&\geq \binom{n}{2} \left(1-\Pr\left[Y^{pq}_n=0 \mid B_{pq}\right]\right)\Pr\left[B_{pq}\right].
\end{align*}
By choosing some $\beta<\frac{1}{c}$ we have by union bound and Lemma \ref{prob area prob}:
\begin{align*}
\liminf_{n\to \infty} \frac{\E\left[Y_n\right]}{n^2} &\geq \frac{1}{2} \Pr\left[B_{pq}\right] \left(1-\lim_{n\to \infty}  \Pr\left[Y^{pq}_n=0 \mid B_{pq}\right]\right)\\
&= \frac{1}{2} \Pr\left[B_{pq}\right] \left(1-\lim_{n\to \infty}  \Pr\left[\lor_{r\in V(K_n)\backslash \{p,q\}} \Cert(r) \mid B_{pq}\right]\right)\\
&\geq \frac{1}{2} \Pr\left[B_{pq}\right] \left(1-\lim_{n\to \infty} n\Pr\left[\Cert(r) \mid B_{pq}\right]\right)\\
&\geq \frac{1}{2} (\alpha\psi)^2 \left(1-c\beta \right)>0.
\end{align*}
Therefore, we have $\E\left[Y_n\right] \in \Omega(n^2)$. On the other hand, we have $\E\left[Y_n\right]\in O(n^2)$, since the graph $K_n$ has $\binom{n}{2}$ edges. This proves $\E\left[Y_n\right]\in \Theta(n^2)$.
\end{proof}

\begin{rem}
A computer program was written to analyze the probability that an edge is eliminated in practice by the criterion of Jonker and Volgenant for uniformly distributed instances.
The results are shown in Table \ref{table computer results}, it suggests that the probability of eliminating an edge is asymptotically constant and a quadratic number of edges is eliminated in practice.
\end{rem}

\begin{table}[h!]
  \centering
    \begin{tabular}{l|l}
    \# vertices & edge elimination rate \\
    \hline
    10 & 21.60\%\\
    100 & 16.13\%\\
    1000 & 15.92\%\\
    10000 & 15.89\%\\
    100000 & 15.45\%
    \end{tabular}
    \caption{Probability that an edge is eliminated. For each number of vertices the elimination condition has been checked for $40000$ edges}
       \label{table computer results}
\end{table}

\section{Probabilistic Analysis of Hougardy and Schroeder}
We want to investigate the expected number of edges that remain after the edge elimination of Hougardy and Schroeder. By symmetry it is enough to bound the probability that a fixed edge $pq$ will be deleted.

\subsection{A Modified Criterion}
In this section we modify the criterion of \cite{Hougardy} for detecting useless edges. The new criterion detects fewer useless edges than the original, but it makes the probabilistic analysis easier. This is due to the fact that the new criterion can be separated into several events of similar nature such that the edge is useless if one of these events happens, and for each of these events we can find a bound on its probability. We first state the new edge elimination criterion in Subsection \ref{subsub modified criterion}. The next step is to show that the new criterion has the property that all edges, that will be deleted by it, will be deleted by the original criterion as well. We will show this in  Subsection~\ref{subsub edge elimination}. Recall that in order to apply the original criterion four conditions need to be fulfilled. We begin in Subsection~\ref{subsub general properties} by proving some general geometrical properties we will need later. In Subsection \ref{subsub first condition}, \ref{subsub second condition}, \ref{subsub third condition} and \ref{subsub fourth condition} we show auxiliary lemmas to check each of the four edge elimination properties. Finally, in Subsection~\ref{subsub canonical test region}, we discribe how to repeatly apply the new criterion to increase the probability that $pq$ is eliminated.

\subsubsection{A Modified Criterion} \label{subsub modified criterion}
For every $n$ we consider a fixed $\delta$ defined by $\delta=\delta_n\coloneqq \frac{1}{\sqrt{n}}$. For this choice the area of test regions that will be constructed later is $\frac{1}{n}$, hence the probability that each of them does not contain a vertex is asymptotically constant.
We abbreviate $R_p^{qr} \coloneqq R_p^{qr}(\delta)$ and $R_q^{pr} \coloneqq R_q^{pr}(\delta)$.

For a point $x$ we denote by $x'$ the orthogonal projection of $x$ onto $pq$ in this paper.

\begin{definition}
The \emph{midsegment} of the edge $pq$ is the segment whose endpoints lie on $pq$, which has the same center as $pq$ and whose length is $\frac{1}{2}$ times the length of $pq$. The \emph{neighborhood} of the edge $pq$ is defined as 
\begin{align*}
\{x\in \R^2\mid \dist(x,pq)\leq 3\delta,\ x' \text{ lies on the  midsegment of $pq$}\}.
\end{align*} 
The edge $pq$ divides the neighborhood into two components, we call each of them a \emph{subneighborhood} of $pq$ (Figure \ref{figure region}). 
\end{definition}

\begin{figure}[ht]
\centering
 \definecolor{zzttqq}{rgb}{0.6,0.2,0}
\begin{tikzpicture}[line cap=round,line join=round,>=triangle 45,x=1.0cm,y=1.0cm]
\fill[color=zzttqq,fill=zzttqq,fill opacity=0.1] (5.51,-0.1) -- (9.57,-0.11) -- (9.57,-2.14) -- (5.51,-2.14) -- cycle;
\draw (3.48,-1.12)-- (11.6,-1.12);
\draw (5.51,-2.14)-- (9.57,-2.14);
\draw (9.57,-0.11)-- (5.51,-0.1);
\draw (5.51,-0.1)-- (5.51,-2.14);
\draw (9.57,-0.11)-- (9.57,-2.14);
\draw (4.81,-0.34) node[anchor=north west] {$3\delta$};
\draw (4.83,-1.41) node[anchor=north west] {$3\delta$};
\draw (6.63,0.75) node[anchor=north west] {$\frac{1}{2}\dist(p,q)$};
\begin{scriptsize}
\fill [color=black] (3.48,-1.12) circle (1.5pt);
\draw[color=black] (3.63,-0.89) node {$p$};
\fill [color=black] (11.6,-1.12) circle (1.5pt);
\draw[color=black] (11.75,-0.89) node {$q$};
\end{scriptsize}
\end{tikzpicture}
  \caption{Neighborhood of $pq$}
  \label{figure region}
\end{figure}

\begin{definition}
  Define the distance functions $\dist_x(a,b)=\dist(a',b')$ and $\dist_y(a,b)=\lvert \dist(a,pq)- \dist(b,pq)\rvert$.
\end{definition}
  
We can interpret $\dist_x$ and $\dist_y$ as the distance in $x$ and $y$-direction if we rotate such that $pq$ is parallel to the $x$-axis and $a,b$ lie on the same side of $pq$.

\begin{definition}
Given a point $c$ with $\dist(c,pq)=\delta$, consider the disk $D$ with radius $2\delta$ around $c$. If the orthogonal projection of $D$ onto $pq$ lies in the midsegment of $pq$, we define the \emph{test region} $T$ with center $c$ as $T:=D\cap[0,1]^2$. Consider the two connected components of $\{x\in \R^2 \mid \dist(x,c)\leq \delta, \dist_y(x,c)\geq  \frac{1}{2}\delta \}$, the connected component with smaller distance to $pq$ is called the \emph{lower test subregion $T^l$} and the other the \emph{upper test subregion $T^u$}. (Figure \ref{figure test region}).
\end{definition}

The test region lies in the neighborhood of $pq$, since by definition the orthogonal projection of $T$ onto $pq$ lies in the midsegment and for every point $z\in T$ we have $\dist(z,pq)\leq \dist(z,c)+\dist(c,pq) \leq 3\delta$ which is the height of the neighborhood. The test subregions are constructed to ensure that any two vertices $r\in T^u$ and $s\in T^l$ have at least the distance $\delta$ but at most the distance $2\delta$ to each other.

\begin{figure}[ht]
\centering
  \definecolor{zzttqq}{rgb}{0.6,0.2,0}
\begin{tikzpicture}[line cap=round,line join=round,>=triangle 45,x=1.0cm,y=1.0cm]
\draw (4.,-4.)-- (18.,-4.);
\draw [fill=zzttqq, fill opacity=0.1,shift={(12.163913483825889,-2.9029201124954596)}] plot[domain=0.5235987755982929:2.6179938779915,variable=\t]({1.*1.0970798875045185*cos(\t r)+0.*1.0970798875045185*sin(\t r)},{0.*1.0970798875045185*cos(\t r)+1.*1.0970798875045185*sin(\t r)});
\draw (11.213814431265998,-2.354380168743206)-- (13.11401253638578,-2.354380168743206);
\draw [fill=zzttqq, fill opacity=0.1, shift={(12.163913483825889,-2.9029201124954596)}] plot[domain=3.665191429188086:5.759586531581293,variable=\t]({1.*1.0970798875045187*cos(\t r)+0.*1.0970798875045187*sin(\t r)},{0.*1.0970798875045187*cos(\t r)+1.*1.0970798875045187*sin(\t r)});
\draw (11.213814431265998,-3.4514600562477136)-- (13.11401253638578,-3.4514600562477136);
\draw(12.163913483825889,-2.9029201124954596) circle (2.194159775009081cm);
\draw[dashed] (12.163913483825889,-2.9029201124954596) -- (14.3580732588 ,-2.9029201124954596);
\draw[dashed] (12.163913483825889,-2.9029201124954596) -- (12.163913483825889, -1.8058402250084041);
\draw[dashed] (11.613814431265998,-2.354380168743206) -- (11.613814431265998,-3.4514600562477136);
\begin{scriptsize}
\draw [fill=black] (4.,-4.) circle (1.5pt);
\draw[color=black] (4.0924016743771965,-3.756902590628255) node {$p$};
\draw [fill=black] (18.,-4.) circle (1.5pt);
\draw[color=black] (18.10046909794259,-3.756902590628255) node {$q$};
\draw [fill=black] (7.5,-4.) circle (1.5pt);
\draw[color=black] (7.60046909794259,-3.756902590628255) node {$u$};
\draw [fill=black] (14.5,-4.) circle (1.5pt);
\draw[color=black] (14.60046909794259,-3.756902590628255) node {$v$};
\draw [fill=black] (12.163913483825889,-2.9029201124954596) circle (1.5pt);
\draw[color=black] (12.260330860389269,-3.1008505093727232) node {$c$};
\draw[color=black] (13.2609933713,-2.7029201124954596) node {$2\delta$};
\draw[color=black] (11.413814431265998,-2.9029201124954598) node {$\delta$};
\draw[color=black] (12.363913483825889,-2.628650140623695725) node {$\delta$};
\end{scriptsize}
\end{tikzpicture}
  \caption{Test region and test subregions. The orthogonal projection of the test region onto $pq$ has to lie in the midsegment $uv$ of the edge $pq$.}
  \label{figure test region}
\end{figure}

Now, we can use the test regions to formulate the new criterion that certifies that an edge can be eliminated.

\begin{definition}
We call a test region \emph{certifying} if there is a vertex $r$ in the upper and a vertex $s$ in the lower test subregion and exactly these two vertices in the test region. 
\end{definition}

\begin{definition}
  The edge $pq$ is called $\gamma$-\emph{long} if $\dist(p,q)\geq \gamma\delta=\gamma \frac{1}{\sqrt{n}}$.
\end{definition}
  
Note that if an edge $pq$ is $\gamma$-long then it is also $\gamma_2$-long for all $\gamma_2 \leq \gamma$.

\begin{theorem} \label{strong crit}
  There is a $\gamma$ such that a $\gamma$-long edge $pq$ can be eliminated if there is a certifying test region.
\end{theorem}

We will prove this theorem in Subsection~\ref{subsub edge elimination}.

\subsubsection{General Geometric Properties} \label{subsub general properties}
\begin{lemma} \label{pytha}
For every $\epsilon>0$ there is a $\gamma$ such that if $r$ lies in the neighborhood of a $\gamma$-long edge $pq$, then $\dist(p,r)-\dist(p,r')<\epsilon\delta$ and $\dist(q,r)-\dist(q,r')<\epsilon\delta$.
\end{lemma}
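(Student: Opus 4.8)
The plan is to reduce everything to the right-angled triangle $p r' r$ and apply the elementary inequality $\sqrt{a^2+b^2}-a\le b^2/a$ for $a,b>0$. Two ingredients are already available. First, since $r$ lies in the neighborhood of $pq$ we have $\dist(r,r')=\dist(r,pq)\le 3\delta$. Second, since $r'$ lies on the $\tfrac12$-subedge of $pq$, its distance to each endpoint of $pq$ lies in $[\tfrac14\dist(p,q),\tfrac34\dist(p,q)]$; in particular $\dist(p,r')\ge\tfrac14\dist(p,q)$ and $\dist(q,r')\ge\tfrac14\dist(p,q)$ (the first of these is exactly the estimate extracted inside the proof of Lemma \ref{atleast length}). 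Note also that $r'$ lies on the segment $pq$, so it is genuinely the foot of the perpendicular from $r$ and Pythagoras applies without a sign issue.

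With these in hand I would write $\dist(p,r)=\sqrt{\dist(p,r')^2+\dist(r,r')^2}$ and estimate
\begin{align*}
\dist(p,r)-\dist(p,r')=\frac{\dist(r,r')^2}{\dist(p,r)+\dist(p,r')}\le\frac{\dist(r,r')^2}{\dist(p,r')}\le\frac{9\delta^2}{\tfrac14\dist(p,q)}=\frac{36\delta^2}{\dist(p,q)}.
\end{align*}
If $pq$ is $\gamma$-appropriate, then $\dist(p,q)\ge\gamma\delta$, so the right-hand side is at most $36\delta/\gamma$. Choosing $\gamma>36/\epsilon$, for instance $\gamma\coloneqq\lceil 36/\epsilon\rceil+1$, makes this strictly less than $\epsilon\delta$. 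The bound $\dist(q,r)-\dist(q,r')<\epsilon\delta$ then follows by the identical computation with $p$ replaced by $q$, using $\dist(q,r')\ge\tfrac14\dist(p,q)$, and since the same $\gamma$ works for both, we are done.

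I do not expect a genuine obstacle: the statement is a routine geometric estimate. The only points requiring a little care are (i) bounding $\dist(p,r)-\dist(p,r')$ by a quantity with $\dist(p,r')$ — not $\dist(p,q)$ — in the denominator, these differing only by the harmless constant factor coming from the $\tfrac12$-subedge, and (ii) keeping track of the constant $36$ so that the choice of $\gamma$ can be made explicit. One could equivalently phrase the whole argument qualitatively: $\dist(p,r)-\dist(p,r')=O\!\left(\delta^2/\dist(p,q)\right)=O(\delta/\gamma)$, which tends to $0$ uniformly in all the data as $\gamma\to\infty$.
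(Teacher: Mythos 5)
Your proof is correct and follows essentially the same route as the paper: both rely on $\dist(p,r')\ge\tfrac14\dist(p,q)$, $\dist(r,r')\le 3\delta$, Pythagoras in the right-angled triangle $pr'r$, and rationalizing $\sqrt{\dist(p,r')^2+\dist(r,r')^2}-\dist(p,r')$ to obtain a bound of order $\delta/\gamma$. The only cosmetic differences are that you drop the square-root term in the denominator for a slightly looser but cleaner bound, and you make the resulting choice of $\gamma$ explicit rather than appealing to the limit $\gamma\to\infty$.
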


\begin{proof}
We know that $pr$ is the hypotenuse of the right-angled triangle $pr'r$. As $r$ lies in the neighborhood of $pq$, we have $\dist(p,r)\geq \dist(p,r')\geq\frac{\dist(p,q)}{4}\geq \frac{\gamma\delta}{4}$. Thus, by Pythagoras' theorem and $\dist(r,r')\leq 3\delta$:
\begin{align*}
\dist(p,r)-\dist(p,r')&= \sqrt{\dist(p,r')^2+\dist(r,r')^2}-\dist(p,r')\\
 &\leq \sqrt{\dist(p,r')^2+(3\delta)^2}-\dist(p,r')\\
&=\frac{\left(\sqrt{\dist(p,r')^2+(3\delta)^2}\right)^2-\dist(p,r')^2}{\sqrt{\dist(p,r')^2+(3\delta)^2}+\dist(p,r')}\\
&= \frac{(3\delta)^2}{\sqrt{\dist(p,r')^2+(3\delta)^2}+\dist(p,r')}\leq \frac{9}{\sqrt{(\frac{\gamma}{4})^2+9}+\frac{\gamma}{4}}\delta.
\end{align*}
Since $\lim_{\gamma \to \infty} \frac{9}{\sqrt{(\frac{\gamma}{4})^2+9}+\frac{\gamma}{4}}=0$, we can choose $\gamma$ big enough such that we have for all $r$ in the neighborhood of $pq$ $\dist(p,r)-\dist(p,r')<\epsilon\delta$. By an analogous calculation the inequality is also satisfied for $q$. This proves the claim.
\end{proof}

\begin{lemma} \label{beta region}
For all $\beta>0$ there is a $\gamma$ such that if $r$ lies in the neighborhood of a $\gamma$-long edge $pq$, then $\dist(p,r)+\dist(r,q)<\dist(p,q)+\beta\delta$.
\end{lemma}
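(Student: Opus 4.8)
The plan is to reduce the triangle-inequality-type bound $\dist(p,r)+\dist(r,q)<\dist(p,q)+\beta\delta$ to the two one-sided estimates already furnished by Lemma~\ref{pytha}, plus the trivial observation that $r'$ lies between $p$ and $q$ on the segment $pq$. The key point is that $\dist(p,r')+\dist(r',q)=\dist(p,q)$ exactly, because $r'$ is the orthogonal projection of $r$ onto the line through $p$ and $q$, and — since $r$ lies in the neighborhood of $pq$ — its projection $r'$ lies on the $\tfrac12$-subedge, hence in particular on the segment $pq$ itself. So the flat distances add up with no error term.

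Concretely, I would first apply Lemma~\ref{pytha} with $\epsilon\coloneqq\beta/2$: this yields a $\gamma$ such that for any $r$ in the neighborhood of a $\gamma$-appropriate edge $pq$ we have both $\dist(p,r)-\dist(p,r')<\tfrac{\beta}{2}\delta$ and $\dist(q,r)-\dist(q,r')<\tfrac{\beta}{2}\delta$. Then I would simply add these two inequalities and substitute $\dist(p,r')+\dist(r',q)=\dist(p,q)$:
\begin{align*}
\dist(p,r)+\dist(r,q) &< \dist(p,r')+\dist(r',q) + \tfrac{\beta}{2}\delta + \tfrac{\beta}{2}\delta\\
&= \dist(p,q)+\beta\delta,
\end{align*}
which is exactly the claimed bound. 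The chosen $\gamma$ depends only on $\beta$, as required, and by the remark after the definition of $\gamma$-appropriate, enlarging $\gamma$ if necessary does no harm.

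I do not anticipate any real obstacle here; the lemma is essentially a packaging of Lemma~\ref{pytha}. The only thing to be slightly careful about is the claim that $r'$ lies on the segment $pq$ (so that $\dist(p,r')+\dist(r',q)=\dist(p,q)$ rather than a signed combination): this is immediate from the definition of the neighborhood, since membership forces $r'$ to lie on the $\tfrac12$-subedge of $pq$, which is a sub-segment of $pq$ containing its center. I would state this explicitly in one sentence before doing the addition, so the equality of flat distances is properly justified.
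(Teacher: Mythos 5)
Your proof is correct and is essentially identical to the paper's: both apply Lemma~\ref{pytha} with $\epsilon=\beta/2$, add the two resulting one-sided bounds, and use $\dist(p,r')+\dist(r',q)=\dist(p,q)$. You are slightly more careful in explicitly justifying that $r'$ lies on the segment $pq$ (via the $\tfrac12$-subedge condition in the neighborhood definition), a point the paper leaves implicit.
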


\begin{proof}
By Lemma \ref{pytha} there is a $\gamma$ such that if $pq$ is $\gamma$-long then $\dist(p,r)-\dist(p,r')<\frac{\beta}{2}\delta$ and $\dist(q,r)-\dist(q,r')<\frac{\beta}{2}\delta$. Adding the two inequalities we get $\dist(p,r)+\dist(r,q)<\dist(p,r')+\frac{\beta}{2}\delta+\dist(r',q)+\frac{\beta}{2}\delta=\dist(p,q)+\beta\delta$.
\end{proof}

\begin{lemma} \label{angle of base}
For every $\epsilon>0$ there is a $\gamma$ such that if the vertex $r$ lies in the neighborhood of a $\gamma$-long edge $pq$ then $\angle qpr< \epsilon$ and $\angle rqp< \epsilon$.
\end{lemma}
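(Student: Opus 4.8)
The plan is to reduce the angle $\angle qpr$ to an angle in a right triangle and then bound that angle by its tangent, which we can make small by choosing $\gamma$ large. Since $r$ lies in the neighborhood of $pq$, the projection $r'$ lies on the $\frac12$-subedge and hence strictly between $p$ and $q$ on the segment $pq$; therefore the ray $\overrightarrow{pq}$ coincides with the ray $\overrightarrow{pr'}$, and consequently $\angle qpr = \angle r'pr$. The point $r'$ is by definition the orthogonal projection of $r$ onto the line $pq$, so the triangle $p r' r$ has a right angle at $r'$, with legs $\dist(p,r')$ and $\dist(r,r')$.

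Next I would invoke the two geometric facts already available for points in the neighborhood. By Lemma~\ref{atleast length}, $\dist(p,r')\geq \frac{\dist(p,q)}{4}$, and since $pq$ is $\gamma$-appropriate this gives $\dist(p,r')\geq \frac{\gamma\delta}{4}$. On the other hand, by the definition of the neighborhood, $\dist(r,r')=\dist(r,pq)\leq 3\delta$. In the right triangle $p r' r$ we thus obtain
\begin{align*}
\tan(\angle qpr)=\tan(\angle r'pr)=\frac{\dist(r,r')}{\dist(p,r')}\leq \frac{3\delta}{\frac{\gamma\delta}{4}}=\frac{12}{\gamma},
\end{align*}
so that $\angle qpr\leq \arctan\!\left(\frac{12}{\gamma}\right)$. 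The same argument with $p$ replaced by $q$ (using $\dist(q,r')\geq \frac{\dist(p,q)}{4}\geq\frac{\gamma\delta}{4}$, which follows symmetrically from Lemma~\ref{atleast length} applied from $q$, or directly from the fact that $r'$ lies on the $\frac12$-subedge) yields $\angle rqp\leq \arctan\!\left(\frac{12}{\gamma}\right)$.

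Finally, since $\lim_{\gamma\to\infty}\arctan\!\left(\frac{12}{\gamma}\right)=0$, given $\epsilon>0$ I would simply pick $\gamma$ large enough that $\arctan\!\left(\frac{12}{\gamma}\right)<\epsilon$; then both $\angle qpr<\epsilon$ and $\angle rqp<\epsilon$ hold for every $r$ in the neighborhood of any $\gamma$-appropriate edge $pq$, as desired. There is essentially no serious obstacle here: the only point requiring a moment's care is the observation that $r'$ lies strictly between $p$ and $q$, so that $\angle qpr$ genuinely equals the angle $\angle r'pr$ inside the right triangle $p r' r$ rather than its supplement; this is guaranteed by the "$\frac12$-subedge" condition in the definition of the neighborhood.
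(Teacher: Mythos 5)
Your proof is correct and follows essentially the same route as the paper's: both reduce $\angle qpr$ to $\angle r'pr$ in the right triangle $p r' r$, bound $\tan(\angle qpr)\leq 3\delta/(\dist(p,q)/4)\leq 12/\gamma$, and argue symmetrically for $q$. The only cosmetic difference is that the paper fixes $\gamma$ with $12/\gamma<\tan(\epsilon)$ at the outset (assuming w.l.o.g.\ $\epsilon<\pi$) while you take $\arctan$ at the end; and you cite Lemma~\ref{atleast length} for $\dist(p,r')\geq\dist(p,q)/4$, which is really an intermediate step of that lemma's proof (a direct consequence of the $\frac12$-subedge condition), not its statement — a harmless shortcut.
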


\begin{proof}
We may w.l.o.g.\ assume $\epsilon<\pi$. Choose $\gamma$ such that $\frac{12}{\gamma}<\tan(\epsilon)$. By the definition of neighborhood $r'$ lies between $p$ and $q$, hence:
\begin{align*}
\tan(\angle qpr)&= \tan(\angle r'pr)= \frac{\dist(r,r')}{\dist(p,r')}\leq \frac{\dist(r,r')}{\frac{\dist(p,q)}{4}}\leq \frac{4\dist(r,r')}{\gamma \delta} \leq \frac{12\delta}{\gamma\delta}=\frac{12}{\gamma} < \tan(\epsilon).
\end{align*}
Since the tangent is monotonically increasing in $[0,\frac{\pi}{2}]$, we get $\angle qpr < \epsilon$. By the same argument we also have $\angle rqp< \epsilon$.
\end{proof}

\subsubsection{First Condition} \label{subsub first condition}
\begin{lemma} \label{certifying isolated}
If a test region $T$ is certifying, then the unique vertices in the test region $r$ and $s$ are $\delta$-isolated.
\end{lemma}
\begin{proof}
Let $D_r$ and $D_s$ be the disks around $r$ and $s$ with radius $\delta$, respectively. For every point $z\in D_r$ we have $\dist(z,c)\leq \dist(z,r)+\dist(r,c)\leq 2\delta$ where $c$ is the center of the test region $T$. Hence $D_r$ lies in $T$ and similarly $D_s$ also lies in $T$. By assumption $T$ and hence $D_r$ and $D_s$ do not contain any other points except $r$ and $s$. Moreover, we have $\dist(r,s)\geq \dist_y(T^u,T^l)= \delta$, hence $r$ and $s$ are $\delta$-isolated.
\end{proof}

\subsubsection{Second Condition} \label{subsub second condition}
\begin{lemma} \label{atmost angle cone}
For every $\epsilon>0$ there is a $\gamma$ such that if $r$ is lying in the neighborhood of a $\gamma$-long edge $pq$, 
then the angles of the 
cones $R_p^{qr}$ and $R_q^{pr}$ are less than $\epsilon$.
\end{lemma}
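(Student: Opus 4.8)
The plan is to write down the opening angle of each cone precisely enough to see that it is controlled by $m\coloneqq\dist(p,r)+\dist(r,q)-\dist(p,q)\ge0$, and then to force $m$ to be small with Lemma \ref{beta region}. Consider $R_p^r$. Whether a point $t$ lies in it depends only on the direction of $\overrightarrow{rt}$, through the distance of $t_r(\delta)\in\Gamma$ to $q$, which by the law of cosines equals $\sqrt{\dist(r,q)^2+\delta^2-2\delta\dist(r,q)\cos\phi}$, where $\phi$ is the angle between $\overrightarrow{rt}$ and $\overrightarrow{rq}$. This distance is largest, namely $\delta+\dist(r,q)$, exactly when $\phi=\pi$, so $R_p^r$ is a single cone with apex $r$ whose axis points from $r$ directly away from $q$ (this is the symmetry to $qr$ noted after Definition \ref{cool cones}; Lemma \ref{neighbor not empty} guarantees $R_p^r\cap\Gamma$ is a proper arc as soon as $m<2\delta$). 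Substituting $\phi=\pi-\theta$ into the defining inequality $\dist(q,t_r(\delta))\ge\delta+\dist(p,q)-\dist(p,r)=\delta+\dist(r,q)-m$ and expanding, one finds that the half-opening angle $\theta$ of $R_p^r$ satisfies $1-\cos\theta\le\frac{m}{\dist(r,q)}+\frac{m}{\delta}$.

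Now let $r$ lie in the neighborhood of a $\gamma$-appropriate edge $pq$. By Lemma \ref{atleast length} (and the symmetry of the neighborhood in $p$ and $q$) we have $\dist(r,q)\ge\frac{\dist(p,q)}{4}\ge\frac{\gamma\delta}{4}$, so $\frac{m}{\dist(r,q)}\le\frac{4m}{\gamma\delta}$ and therefore $1-\cos\theta\le\frac{m}{\delta}(1+4/\gamma)$; the same estimate with $p$ and $q$ exchanged (now using $\dist(p,r)\ge\frac{\dist(p,q)}{4}$, which is Lemma \ref{atleast length} itself) bounds the half-opening angle of $R_q^r$. To conclude, let $\epsilon>0$; we may assume $\epsilon<\pi$. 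Put $\beta\coloneqq\frac{1-\cos(\epsilon/2)}{3}\in(0,\frac{1}{3})$ and use Lemma \ref{beta region} to pick $\gamma_0$ such that $r$ in the neighborhood of a $\gamma_0$-appropriate edge forces $m<\beta\delta$; set $\gamma\coloneqq\max(\gamma_0,4)$. Then a $\gamma$-appropriate edge is $\gamma_0$-appropriate and has $4/\gamma\le1$, so both half-opening angles satisfy $1-\cos\theta\le\frac{2m}{\delta}<2\beta<1-\cos(\epsilon/2)$, hence $\theta<\epsilon/2$ by monotonicity of $\cos$ on $[0,\pi]$, hence each opening angle $2\theta$ is less than $\epsilon$.

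I do not foresee a genuine obstacle here: the whole argument is a short trigonometric computation together with the already-proved Lemmas \ref{atleast length} and \ref{beta region}. The only two things that need care are (i) identifying the cone's axis correctly and turning its definition into the clean scalar bound $1-\cos\theta\le\frac{m}{\dist(r,q)}+\frac{m}{\delta}$, and (ii) the mild circular dependence between $\gamma$ and $\beta$, which is resolved — as above — by choosing $\beta$ from $\epsilon$ first and only afterwards extracting $\gamma$ from Lemma \ref{beta region} (enlarging it to at least $4$ if necessary).
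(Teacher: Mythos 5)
Your proof is correct and follows essentially the same route as the paper: both use the symmetry of $R_p^r$ about the line $qr$, apply the law of cosines to an endpoint of the arc $R_p^r\cap\Gamma$, and invoke Lemma \ref{beta region} to make the triangle-inequality slack small. Your reformulation in terms of $m=\dist(p,r)+\dist(r,q)-\dist(p,q)$ and the clean inequality $1-\cos\theta\le m/\dist(r,q)+m/\delta$ is the same computation the paper carries out implicitly (the paper's $\beta$ is your $m/\delta$), and your explicit ordering of the choices — fix $\epsilon$, set $\beta$, then extract $\gamma$ from Lemma \ref{beta region} and enlarge to $\max(\gamma_0,4)$ — resolves the circular-looking dependency in the paper's last sentence a bit more transparently.
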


\begin{proof}
Consider the circle $\Gamma$ with radius $\delta$ and center$r$. Let $\widetilde{q}$ be the second intersection point of the ray $\overrightarrow{qr}$ with $\Gamma$. As $\dist(\widetilde{q},q)=\dist(r,q)+\delta \geq \dist(p,q)-\dist(p,r) + \delta$, the cone $R_p^{qr}$ contains $\widetilde{q}$ and is not emtpy. 
Let $r^p$ be the point on $R_p^{qr} \cap \Gamma$ that maximizes the angle $\angle \widetilde{q}rr^p$ (Figure \ref{figure most}). 
Since the cone $R_p^{qr}$ is symmetric in $\overrightarrow{qr}$, the ray halves the angle of $R_p^{qr}$ and it is enough to show that $\angle \widetilde{q}rr^p< \frac{\epsilon}{2}$. This is equivalent to $\angle qrr^p> \pi-\frac{\epsilon}{2}$. We can choose by Lemma \ref{beta region} a $\gamma$ such that if $pq$ is $\gamma$-long, we have for all $r$ in the neighborhood of $pq$ the inequality $\dist(p,q)-\dist(p,r)> \dist(r,q)-\beta\delta$ for some constant $\beta <1$ we will determine later. Therefore, by the definition of $R^{qr}_p$ and the cosine law:
\begin{align*}
(\dist(r,q)+(1-\beta) \delta)^2 & < (\dist(p,q)-\dist(p,r)+\delta)^2 \leq \dist(r^p,q)^2 \\
&=\dist(r^p,r)^2+\dist(r,q)^2-2\dist(r^p,r)\dist(r,q)\cos(\angle qrr^p)\\
&= \delta^2+\dist(r,q)^2-2\delta\dist(r,q)\cos(\angle qrr^p).
\end{align*}
Hence, we have:
\begin{align*}
\cos(\angle qrr^p)&< \frac{\delta^2+\dist(r,q)^2-(\dist(r,q)+(1-\beta) \delta)^2}{2\delta\dist(r,q)} \\
&= \frac{\delta^2-2(1-\beta)\delta\dist(r,q)-(1-\beta)^2\delta^2}{2\delta\dist(r,q)} \\
&= (\beta-1)+\frac{(1-(1-\beta)^2)\delta}{2\dist(r,q)}\leq (\beta-1)+\frac{(1-(1-\beta)^2)\delta}{2\frac{\dist(p,q)}{4}} \\
&\leq (\beta-1)+\frac{2(1-(1-\beta)^2)\delta}{\gamma\delta}=(\beta-1)+\frac{2(1-(1-\beta)^2)}{\gamma}.
\end{align*}
Choose $\gamma$ big enough such that by Lemma \ref{beta region} $\beta$ is small enough to satisfy $\cos(\angle qrr^p)< (\beta-1)+\frac{2(1-(1-\beta)^2)}{\gamma}\leq \cos(\pi-\frac{\epsilon}{2})$. Since the cosine is monotonically decreasing in $[0,\pi]$, we have $\angle qrr^p> \pi-\frac{\epsilon}{2}$. Similarly, we have the same result for $R_q^{pr}$.
\end{proof}

\begin{figure}
\centering
 \begin{tikzpicture}[line cap=round,line join=round,>=triangle 45,x=1.0cm,y=1.0cm]
\clip(-7.737225934130034,-5.538755080983481) rectangle (3.6933929431733676,-0.03290724479542577);
\draw (-7.,-4.)-- (3.,-4.);
\draw(-3.0931629766497006,-2.725984990756971) circle (1.6220410440773358cm);
%\draw(3.,-4.) circle (7.512723610391159cm);
%\draw (-7.,-4.)-- (-4.056068883889179,-1.4206768422821325);
\draw [domain=-7.737225934130034:3.0] plot(\x,{(--20.550606878869715--1.274015009243029*\x)/-6.093162976649701});
\draw [domain=-7.737225934130034:-3.0931629766497006] plot(\x,{(--6.662397888627583--1.3053081484748386*\x)/-0.9629059072394788});
\draw [domain=-7.737225934130034:-3.0931629766497006] plot(\x,{(--1.3246800785274004-0.8101569702090314*\x)/-1.4052269682486238});
%\draw [domain=-7.737225934130034:3.6933929431733676] plot(\x,{(-27.25984990756971-0.*\x)/10.});
\begin{scriptsize}
\draw [fill=black] (-7.,-4.) circle (1.5pt);
\draw[color=black] (-6.959225696407813,-3.7673083858621075) node {$p$};
\draw [fill=black] (3.,-4.) circle (1.5pt);
\draw[color=black] (3.0829619874220864,-3.779277620288603) node {$q$};
\draw [fill=black] (-3.0931629766497006,-2.725984990756971) circle (1.5pt);
\draw[color=black] (-3.0093783356642296,-2.510538771080051) node {$r$};
\draw[color=black] (-3.0452860389437166,-0.8946921235031214) node {$\Gamma$};
\draw [fill=black] (-4.056068883889179,-1.4206768422821325) circle (1.5pt);
\draw[color=black] (-4.2900864192992705,-1.3537691562979947) node {$r^p$};
%\draw [fill=black] (-4.498389944898324,-3.5361419609660025) circle (1.5pt);
\draw [fill=black] (-4.680869339956193,-2.394012623552156) circle (1.5pt);
\draw[color=black] (-4.529471107829185,-2.2113079104176565) node {$\widetilde{q}$};
%\draw [fill=black] (-4.715204020727038,-2.725984990756971) circle (1.5pt);
%\draw[color=black] (-4.637194217667646,-2.558415708786034) node {$u$};
\end{scriptsize}
\end{tikzpicture}
  \caption{Sketch for Lemma \ref{atmost angle cone} and Lemma \ref{singlepot}}
  \label{figure most}
\end{figure}

\begin{definition}
  For a given \textsc{Euclidean TSP} instance a vertex $r$ is called $\delta$-\emph{isolated} if for all other vertices $z\neq r$ we have $\dist(r,z) \geq \delta$.
\end{definition}

\begin{lemma} \label{strongpotential}
There exists a $\gamma$ such that if a $\delta$-isolated vertex $r$ lies in the neighborhood of a $\gamma$-long edge $pq$, then neither $R_p^{qr}(\delta)$ nor $R_q^{pr}(\delta)$ contains both neighbors of $r$ in the optimal tour.
\end{lemma}

\begin{proof}
After applying the following lemma, it remains to show that the neighbors of $r$ cannot both lie in $R_p^{qr}$ or $R_q^{pr}$.

\begin{lemmanonumber} [Lemma 7 in \cite{Hougardy}] \label{neighbor in cone}
If a vertex $r$ is $\delta$-isolated, then in every optimal TSP tour containing the edge $pq$ both neighbors of $r$ lie in $R_p^{qr}(\delta) \cup R_q^{pr}(\delta)$.
\end{lemmanonumber}
We prove it similar to Lemma 9 in \cite{Hougardy}. We can choose $\gamma$ by Lemma \ref{atmost angle cone} such that the angles of the cones $R_p^{qr}$ and $R_q^{pr}$ are less than $60^\circ$. By Lemma \ref{beta region} there is a $\gamma$ such that if $pq$ is $\gamma$-long, then $\dist(p,r)+\dist(r,q)-\dist(p,q)< \delta$.
Thus, we can use the following lemma:

\begin{lemmanonumber}[Lemma 10 in \cite{Hougardy}] \label{angle of neighbors}
  Let $x$ and $y$ be the neighbors of a $\delta$-isolated vertex $r$ in an optimal TSP tour containing the edge $pq$. If $\dist(p,r)+\dist(r,q)-\dist(p,q)\leq 2\delta$, then we have:
  \begin{align*}
  \angle xry \geq \arccos \left(1-\frac{(2\delta+\dist(p,q)-\dist(p,r)-\dist(r,q))^2}{2\delta^2} \right).
  \end{align*}
\end{lemmanonumber}

Therefore,
\begin{align*}
\angle xry &\geq \arccos \left(1-\frac{(2\delta+\dist(p,q)-\dist(p,r)-\dist(r,q))^2}{2\delta^2} \right)\\
&>\arccos \left(1-\frac{\delta^2}{2\delta^2} \right)=\arccos(\frac{1}{2})=60^\circ,
\end{align*} contradiction.
\end{proof}

\subsubsection{Third Condition} \label{subsub third condition}

\begin{lemma}\label{r not in cone}
There is a $\gamma$ such that for every $\gamma$-long edge $pq$, test region $T$ and vertices $r, s$ with $r\in T^u$ and $s\in T^l$ we have $r\not\in R_p^{qs}\cup R_q^{ps}$.
\end{lemma}

\begin{proof}
Let $p_s$ be the intersection point of $ps$ with the circle around $s$ and radius $\delta$. 
Then, we have:
\begin{align*}
\dist(q,p_s) &= \dist(q,p_s)+\dist(p_s,p)-\dist(p_s,p) \geq \dist(p,q)-\dist(p_s,p)\\
&=\dist(p,q)+\delta-\dist(s,p).
\end{align*}
Therefore, we have by definition $p\in R_p^{qs}$. We have by the definition of test region and monotonicity of the tangent function $\angle{rsp} \geq \arctan\left(\frac{\dist_y(r,s)}{\dist_x(r,s)}\right) \geq \arctan(\frac{\delta}{2\delta})= \arctan(\frac{1}{2})$. By Lemma \ref{atmost angle cone} we can choose $\gamma$ big enough such that the angle of the cone $R_p^{qs}$ is less than $\arctan(\frac{1}{2})$. Thus, we have $r\not \in R_p^{qs}$ and similarly $r\not\in R_q^{ps}$.
\end{proof}

\subsubsection{Fourth Condition}\label{subsub fourth condition}
\begin{lemma} \label{singlepot}
Let $\Gamma$ be the circle around $r$ with radius $\delta$ and $r^p$ is the point on $R^{qr}_p\cap \Gamma$ with the largest distance to $p$. Let $r^q$ be defined similarly. There is a $\gamma$ such that if a $\delta$-isolated vertex $r$ lies in a test region with center $c$ of a $\gamma$-long edge $pq$, then: 
  \begin{align*}
  \dist(p,r^p) &<\dist_x(p,c)\\
  \dist(r^q,q) &<\dist_x(c,q).
  \end{align*}
\end{lemma}

\begin{proof}
Let $\widetilde{q}$ be the second intersection point of the ray $\overrightarrow{qr}$ and $\Gamma$. (Figure \ref{figure most}).  Choose $\epsilon_1, \epsilon_2>0$ such that $\frac{\sqrt{3}}{2}+\epsilon_1+\epsilon_2<1$. Since $\widetilde{q}$ lies on the circle with radius $\delta$ around $r$, it lies in the neighborhood of $pq$. So we can choose by Lemma~\ref{pytha} a $\gamma_1$ such that if $pq$ is $\gamma_1$-long, then 
\begin{align}
\dist(p,r^p)-\dist_x(p,r^p)<\epsilon_1\delta. \label{eq hyp distance p rp}
\end{align}
By Lemma \ref{atmost angle cone}, there is a $\gamma_2$ such that if $pq$ is $\gamma_2$-long the angle of the cone $R_p^{qr}$ is small enough to ensure
\begin{align}
\dist_x(r^p,\widetilde{q})\leq \epsilon_2 \delta. \label{eq div rp widetilde q}
\end{align}
By Lemma \ref{angle of base}, we can choose $\gamma_3$ such that if $pq$ is $\gamma_3$-long, then 
\begin{align}
\dist_x(\widetilde{q},r) = \cos(\angle \widetilde{q}qp)\dist(\widetilde{q},r)=\cos(\angle rqp)\delta>\left(\frac{\sqrt{3}}{2}+\epsilon_1+\epsilon_2 \right) \delta. \label{dist x widetilde q r}
\end{align}
As by definition of test regions $\dist_x(r,c)^2+\dist_y(r,c)^2\leq \delta^2$ and $\dist_y(r,c) \geq \frac{\delta}{2}$, we have 
\begin{align}
\dist_x(r,c) \leq \frac{\sqrt{3}}{2}\delta. \label{eq dist x rc}
\end{align}
Therefore, for $\gamma=\max\{\gamma_1,\gamma_2, \gamma_3\}$ we get altogether: 
\begin{align*}
\dist(p,r^p)&\overset{\ref{eq hyp distance p rp}}{<}\dist_x(p,r^p)+\epsilon_1\delta\leq \dist_x(p,\widetilde{q})+\dist_x(\widetilde{q},r^p)+\epsilon_1\delta
\overset{\ref{eq div rp widetilde q}}{\leq} \dist_x(p,\widetilde{q})+\epsilon_1\delta+\epsilon_2\delta \\
&\overset{\ref{dist x widetilde q r}}{<} \dist_x(p,\widetilde{q})+\epsilon_1\delta+\epsilon_2\delta + \dist_x(\widetilde{q},r) - \left(\frac{\sqrt{3}}{2}+\epsilon_1+\epsilon_2\right) \delta\\
&=\dist_x(p,r) -\dist_x(r,c) + \dist_x(r,c) - \frac{\sqrt{3}}{2} \delta \overset{\ref{eq dist x rc}}{\leq} \dist_x(p,c).
\end{align*}
Similarly, we can prove the second statement.
\end{proof}

\subsubsection{Proof of Theorem \ref{strong crit}} \label{subsub edge elimination}

\begin{proof}[Proof of Theorem \ref{strong crit}]
We want to apply Theorem~\ref{main elimination} to eliminate the edge $pq$. Recall that we have to check that four conditions are fulfilled.
By Lemma~\ref{certifying isolated} and Lemma~\ref{strongpotential} there is a $\gamma_1$ such that if $pq$ is $\gamma_1$-long the first two conditions are satisfied.

In the proof of Theorem \ref{main elimination} we need $r\not\in R_p^{qs} \cup R_q^{ps}$ and $s\not\in R_p^{qr}\cup R_q^{pr}$ to exclude that $rs$ is part of the tour. Note that this already follows from $r\not\in R_p^{qs} \cup R_q^{ps}$, since we know that every neighbor of $s$ lies in $R_p^{qs} \cup R_q^{ps}$. By Lemma \ref{r not in cone} there is a $\gamma_2$ such that if $pq$ is $\gamma_2$-long, then we have $r\not\in R_p^{qs} \cup R_q^{ps}$, which is the third condition.

Let $c$ be the center of the test region. 
By Lemma \ref{singlepot} there is a $\gamma_3$ such that for a $\gamma_3$-long edge $pq$ we have:
\begin{align*}
&\dist(p,r^p)+\dist(s^q,q)+\dist(r,s) <\dist_x(p,c)+\dist_x(c,q)+\dist(r,s)=\dist(p,q)+\dist(r,s) \\
\leq &\dist(p,q)+\dist(r,c) + \dist(c,s) \leq \dist(p,q)+2\delta = \dist(p,q)+ \dist(r^p,r) + \dist(s,s^q).
\end{align*}
Similarly, we can ensure $\dist(p,s^p)+\dist(r^q,q)+\dist(r,s) < \dist(p,q)+ \dist(s^p,s)+\dist(r,r^q)$, which is the fourth conditon. Hence, $\gamma=\max\{\gamma_1,\gamma_2,\gamma_3\}$ satisfies the condition.
\end{proof}

\subsubsection{Canonical Test Regions} \label{subsub canonical test region}
Now we want to define for each edge a certain number of test regions. The number of test regions is dependent on the length of the edge. An edge can be eliminated if any of these is certifying.

\begin{lemma} \label{insidebox}
There exists a $\gamma$ such that given a $\gamma$-long edge $pq$, there is a subneighborhood of $pq$ that lies completely inside the unit square.
\end{lemma}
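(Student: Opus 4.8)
The plan is to put coordinates on the unit square and reduce the statement to a short case distinction. Since reflecting $[0,1]^2$ across the line $x_1=\tfrac12$ (or across $x_2=\tfrac12$) maps the unit square onto itself, carries every subneighborhood of an edge onto a subneighborhood of the reflected edge, and preserves $\dist(p,q)$ and hence $\gamma$-appropriateness, I may assume without loss of generality that $p=(p_1,p_2)$ and $q=(q_1,q_2)$ satisfy $q_1\ge p_1$ and $q_2\ge p_2$. Write $a:=q_1-p_1\ge 0$, $b:=q_2-p_2\ge 0$, $d:=\dist(p,q)=\sqrt{a^2+b^2}\ge\gamma\delta$, and let $\vec u:=\tfrac1d(b,-a)$, a unit vector orthogonal to $pq$ with non-negative first and non-positive second coordinate.

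Next I would make the two subneighborhoods explicit. The $\tfrac12$-subedge has the same midpoint $\tfrac{p+q}{2}$ as $pq$ and half its length, so its endpoints are $P:=\tfrac{3p+q}{4}$ and $Q:=\tfrac{p+3q}{4}$, both of which lie in $[0,1]^2$ since $[0,1]^2$ is convex and contains $p,q$. The two subneighborhoods are the rectangles $\{x+t\vec u : x\in\overline{PQ},\ 0\le t\le 3\delta\}$ and $\{x-t\vec u : x\in\overline{PQ},\ 0\le t\le 3\delta\}$; each is the convex hull of $P$, $Q$ and the two translates of $P,Q$ by $+3\delta\vec u$ (resp.\ $-3\delta\vec u$), so, $[0,1]^2$ being convex and already containing $P,Q$, such a rectangle lies in $[0,1]^2$ iff those two translated corners do. Because $+3\delta\vec u$ can only push the first coordinate out through $x_1=1$ and the second out through $x_2=0$ (and $-3\delta\vec u$ vice versa), using $Q_1=\tfrac{p_1+3q_1}{4}\ge P_1$ and $P_2=\tfrac{3p_2+q_2}{4}\le Q_2$ this boils down to
\begin{align*}
\text{($+$ side):}\quad & \frac{p_1+3q_1}{4}+\frac{3\delta b}{d}\le 1 \ \text{ and }\ \frac{3p_2+q_2}{4}-\frac{3\delta a}{d}\ge 0,\\
\text{($-$ side):}\quad & \frac{3p_1+q_1}{4}-\frac{3\delta b}{d}\ge 0 \ \text{ and }\ \frac{p_2+3q_2}{4}+\frac{3\delta a}{d}\le 1.
\end{align*}

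Finally I would assume, for a contradiction, that neither side lies in $[0,1]^2$, so for each side at least one of its two inequalities fails; this leaves four combinations. In the two "opposite-edge" combinations one subtracts the two failed inequalities and uses $Q_1-P_1=\tfrac a2$ (resp.\ $Q_2-P_2=\tfrac b2$) together with $b,a\le 1$ and $\delta/d\le 1/\gamma$ to obtain $a>2-12/\gamma$ (resp.\ $b>2-12/\gamma$), impossible once $\gamma\ge 12$ because $a,b\le 1$. In the two "same-corner" combinations the failed inequalities force both $p$ and $q$ to be close to one corner of $[0,1]^2$; estimating the distances of $p_1,q_1,p_2,q_2$ to the two sides at that corner by suitable multiples of $\delta a/d$ and $\delta b/d$ one gets $a\le 12\delta b/d$ and $b\le 12\delta a/d$, hence $d<12\delta$ (or $a=b=0$), contradicting $d\ge\gamma\delta$ for $\gamma\ge 12$. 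So any $\gamma\ge 12$ works.

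I expect the two "same-corner" combinations to be the only delicate point: there the crude bound $a,b\le 1$ is useless, and instead one has to bound the distances of $p$ and $q$ to the two sides of $[0,1]^2$ meeting at the relevant corner, note that both are $O(\delta)$, and conclude that $d=\dist(p,q)$ itself is $O(\delta)$ — which is exactly what $\gamma$-appropriateness forbids. The reduction in the second step and the "opposite-edge" combinations are routine.
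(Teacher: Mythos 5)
Your proposal is correct, and it takes a genuinely different route from the paper. The paper uses the dihedral symmetry of the square to normalize the angle of $pq$ with the $x$-axis to $[0^\circ,45^\circ]$, proves $0\le z_x\le 1$ for the entire neighborhood once $\gamma\ge 24$, and then handles the $y$-coordinate by a case split: when the angle is at least $30^\circ$ both subneighborhoods fit, and when it is smaller than $30^\circ$ a pigeonhole argument on $p_y$ and $1-q_y$ singles out the subneighborhood with room above (or below). You instead only use the two axis-parallel reflections (so that $q$ lies to the upper right of $p$), write down the exact corner conditions for each subneighborhood to fit, and derive a contradiction from ``both fail'': the two opposite-edge combinations force $a>2-12/\gamma$ or $b>2-12/\gamma$ against $a,b\le 1$, and the two same-corner combinations force $a<12\delta b/d$ and $b<12\delta a/d$, whence $d<12\delta$, against $\gamma$-appropriateness. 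The corner-condition argument is more systematic than the paper's angle-based case split, does not need the slope normalization, and yields the slightly better constant $\gamma\ge 12$ rather than $\gamma\ge 24$; the paper's version is arguably more geometric at a glance. One small stylistic point: in the same-corner step you should keep the inequalities strict ($a<12\delta b/d$, etc.), so that multiplying yields the strict $d<12\delta$ and the case $\gamma=12$ is still a contradiction; as written with $\le$ you would want $\gamma>12$, or to observe separately that equality throughout is impossible.
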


\begin{proof}
We can w.l.o.g.\ assume that the angle formed by the line $pq$ with the $x$-axis is $0^\circ \leq \alpha \leq 45^\circ$, otherwise we can rotate and mirror the unit square. In this case, there is a constant $c>0$ such that the $x$-coordinates of the midsegment lie between $c$ and $1-c$. Given that $\sqrt{2} \geq \dist(p,q) \geq \gamma\delta$, we can choose a $\gamma$ such that if $pq$ is $\gamma$-long, the $x$-coordinates of both subneighborhoods lie between $0<c-3\delta$ and $1-c+3\delta<1$. Since the range of $y$-coordinates of the midsegment is at most $\frac{1}{2}$, we can choose $\gamma$ large enough to ensure that the range of $y$-coordinates of the neighborhood is at most $6\delta+\frac{1}{2}<1$. Then, it is not possible for both the top and bottom borders of the unitsquare to intersect the neighborhood. Since the borders of the unit square do not intersect $pq$, which divides the two subneighborhoods, one of the subneighborhoods lies completely within the unit square. 
\end{proof}

\begin{definition}
We denote the smallest $\gamma$ satisfying the conditions of Theorem \ref{strong crit} and Lemma \ref{insidebox} by $\eta$. If an edge is $\eta$-long we call it \emph{long}. 
\end{definition}

The practical algorithm in \cite{Hougardy} checks the edge elimination condition for 10 pairs of heuristically chosen $r$ and $s$. We consider the general setting of checking at most $f(n)$ pairs for some function $f:\N \to \N$. We now assume that such a function is given. In the end we will show that any $f\in \omega(\log(n)) \cap o(n)$ is sufficient to conclude that the expected value of remaining edges is linear.

\begin{definition}
For every long edge $pq$ we divide the midsegment $uv$ equally into $\min\{f(n), \lfloor \frac{\dist(u,v)}{4\delta} \rfloor\}$ parts. By Lemma \ref{insidebox} there is a subneighborhood that lies inside of the unit square. Place inside of this subneighborhood test regions for each of the subdivisions of $uv$ such that the orthogonal projections of the test regions' centers coincide with the centers of the subdivisions (Figure \ref{figure test area}). We call the constructed test regions $\mathcal{T}=\{T_1,\dots, T_{\lvert \mathcal{T} \rvert}\}$ the \emph{canonical test regions} of $pq$. The $i$-th canonical test region $T_i$ contains $T_i^u$ and $T_i^l$, the upper and lower test subregion.
\end{definition}

The subneighborhood has height $3\delta$ and the upper and lower test subregions have at most distance $2\delta$ to $pq$, therefore the constructed upper and lower test subregions lie completely in the subneighborhood and hence in the unit square. Remember that the test region was defined as the intersection of a disk with the unit square and note that the test region of a canonical test region is not necessarily a disk.

Since $pq$ has at least length $\eta\delta$, the midsegment has at least length $\frac{\eta\delta}{2}$. Therefore, we constructed at least $\frac{\eta\delta}{2\cdot4\delta}=\frac{\eta}{8}$ canonical test regions for every long edge.

\begin{figure}[ht]
\centering
 \begin{tikzpicture}[line cap=round,line join=round,>=triangle 45,x=1.0cm,y=1.0cm]
\draw (0.,-3.)-- (14.,-3.);
\draw [shift={(4.6797346193001115,-2.468999167977505)}] plot[domain=0.523598775598309:2.617993877991484,variable=\t]({1.*0.531000832022491*cos(\t r)+0.*0.531000832022491*sin(\t r)},{0.*0.531000832022491*cos(\t r)+1.*0.531000832022491*sin(\t r)});
\draw (4.219874409337963,-2.2034987519662548)-- (5.1395948292622595,-2.2034987519662548);
\draw [shift={(4.6797346193001115,-2.468999167977505)}] plot[domain=3.665191429188102:5.759586531581277,variable=\t]({1.*0.531000832022491*cos(\t r)+0.*0.531000832022491*sin(\t r)},{0.*0.531000832022491*cos(\t r)+1.*0.531000832022491*sin(\t r)});
\draw (4.219874409337963,-2.734499583988755)-- (5.1395948292622595,-2.734499583988755);
\draw(4.6797346193001115,-2.468999167977505) circle (1.062001664044982cm);
\draw [shift={(9.345709248860153,-2.4523349728719332)}] plot[domain=0.523598775598306:2.6179938779914886,variable=\t]({1.*0.5310008320224903*cos(\t r)+0.*0.5310008320224903*sin(\t r)},{0.*0.5310008320224903*cos(\t r)+1.*0.5310008320224903*sin(\t r)});
\draw (8.885849038898003,-2.186834556860685)-- (9.805569458822301,-2.186834556860685);
\draw [shift={(9.345709248860153,-2.4523349728719332)}] plot[domain=3.6651914291880976:5.75958653158128,variable=\t]({1.*0.5310008320224917*cos(\t r)+0.*0.5310008320224917*sin(\t r)},{0.*0.5310008320224917*cos(\t r)+1.*0.5310008320224917*sin(\t r)});
\draw (8.885849038898003,-2.7178353888831817)-- (9.805569458822301,-2.7178353888831817);
\draw(9.345709248860153,-2.4523349728719332) circle (1.0620016640449834cm);
\draw [shift={(7.0127219340801314,-2.4523349728719337)}] plot[domain=0.5235987755982808:2.6179938779915126,variable=\t]({1.*0.5310008320224947*cos(\t r)+0.*0.5310008320224947*sin(\t r)},{0.*0.5310008320224947*cos(\t r)+1.*0.5310008320224947*sin(\t r)});
\draw (6.552861724117973,-2.1868345568606946)-- (7.47258214404229,-2.1868345568606946);
\draw [shift={(7.0127219340801314,-2.4523349728719337)}] plot[domain=3.6651914291880723:5.759586531581307,variable=\t]({1.*0.5310008320224944*cos(\t r)+0.*0.5310008320224944*sin(\t r)},{0.*0.5310008320224944*cos(\t r)+1.*0.5310008320224944*sin(\t r)});
\draw (6.552861724117973,-2.717835388883172)-- (7.47258214404229,-2.717835388883172);
\draw(7.0127219340801314,-2.4523349728719337) circle (1.0620016640449896cm);
\draw (3.5,-3.)-- (3.5,-1.406997503932524);
\draw (3.5,-1.406997503932524)-- (10.5,-1.406997503932524);
\draw (10.5,-1.406997503932524)-- (10.5,-3.);
\begin{scriptsize}
\draw [fill=black] (0.,-3.) circle (1.5pt);
\draw[color=black] (0.11374516037349916,-2.8022978994555076) node {$p$};
\draw [fill=black] (14.,-3.) circle (1.5pt);
\draw[color=black] (14.111669049053624,-2.8022978994555076) node {$q$};
\draw [fill=black] (3.5,-3.) circle (1.5pt);
\draw[color=black] (3.35,-2.8022978994555076) node {$u$};
\draw [fill=black] (10.5,-3.) circle (1.5pt);
\draw[color=black] (10.611669049053624,-2.8022978994555076) node {$v$};
\draw [fill=black] (4.6797346193001115,-2.468999167977505) circle (1.5pt);
\draw[color=black] (4.963025936094828,-2.45685607132933) node {$c_1$};
\draw [color=black] (5.833333333333334,-3.) circle (1.5pt);
\draw [color=black] (8.166666666666668,-3.) circle (1.5pt);
\draw [fill=black] (9.345709248860153,-2.4523349728719332) circle (1.5pt);
\draw[color=black] (9.62900056565487,-2.459021412027361) node {$c_3$};
\draw [fill=black] (7.0127219340801314,-2.4523349728719337) circle (1.5pt);
\draw[color=black] (7.296013250874849,-2.459021412027361) node {$c_2$};
\end{scriptsize}
\end{tikzpicture}
  \caption{The canonical test regions. The open dots are the subdivisions of the midsegment $uv$.}\label{figure test area}
\end{figure}

Every canonical test region has width $4\delta$ and by the construction the distances of their midpoints is at least $4\delta$. Therefore, the constructed canonical test regions intersect in at most one point.

Alltogether, we can formulate the new criterion:

\begin{corollary}
A long edge is useless if there is a certifying canonical test region $T_i$.
\end{corollary}

\subsection{Probabilistic Analysis}
In this subsection we estimate the probability that an edge cannot be deleted by our modified criterion on a random instance and bound the expected value of the number of edges that are not detected as useless.

\begin{definition}
  For a subset $C$ of the unit square we define $\#(C)$ as the number of vertices in $C$.
\end{definition}

\begin{definition}
Let $A_i$ and $B_i$ be events defined as follows:
\begin{itemize}
\item $A_i$ is the event $\#(T^l_i)=0 \lor \#( T^u_i)=0$
\item $B_i$ is the event $\#(T_i^u) \geq 1 \land \#(T_i^l) \geq 1 \land \#(T_i) \geq 3$.
\end{itemize}
\end{definition}

Note that $\Pr\left[A_i\right]$ and $\Pr\left[B_i\right]$ are dependent on the number of vertices $n$ of the random instance and on the position of $T_i$. 
Observe that the two events are disjoint and if $A_i^{c}\land B_i^{c}$ happens, then the test region $T_i$ is certifying and therefore $pq$ is useless. 

Our next aim is to show the almost mutually independence of $A_i \lor B_i$ for $T_i\in \mathcal{T}$.

\begin{lemma} \label{lim converging}
Given $J$, $K\subseteq \{1,\dots, \lvert \mathcal{T} \rvert\}\backslash\{i\}$ with $J\cap K = \emptyset$. Then,
\begin{align*}
\Pr[A_i \mid \land_{j\in J}A_j] & \leq \Pr[A_i] \\
\limsup_{n\to\infty} \Pr[B_i \mid (\land_{j\in J}A_j) \land (\land_{k\in K} B_k)] & \leq \limsup_{n\to\infty} \Pr[B_i].
\end{align*}
\end{lemma}

\begin{proof}
For the first statement, note that for a set $C\in \{T_i^u, T_i^l\}$ we have:
\begin{align*}
&\Pr[x\in C \mid \land_{j\in J} (\#(T_j^u)=0 \lor \#(T_j^l)=0)]\geq \Pr[x\in C]
\end{align*}
As $A_i = \#(T_i^u)=0 \lor \#(T_i^l)=0$, we conclude $\Pr[A_i \mid \land_{j\in J}A_j] \leq \Pr[A_i]$.

To show the second statement, observe that for a set $C\in \{T_i^u, T_i^l,T_i\}$ we have:
\begin{align*}
&\Pr[x\in C \mid (\land_{j\in J}A_j) \land (\land_{k\in K} B_k)]\\
=&\Pr[x\in C \mid (\land_{j\in J} (\#(T_j^u)=0 \lor \#(T_j^l)=0)) \land (\land_{k\in K} (\#(T_k^u)\geq 1 \land \#(T_k^l)\geq 1 \land \#(T_k)\geq 3))] \\
\leq &\Pr[x\in C \mid \land_{j\in J} (\#(T_j^u)=0 \land \#(T_j^l)=0)]\\
\leq &\Pr[x\in C \mid \#(\mathcal{T}\backslash T_i)=0]=\Pr[x\in C \mid x\not\in(\mathcal{T}\backslash T_i)]\\
=&\frac{\Pr[x\in C \land x\not\in(\mathcal{T}\backslash T_i)]}{\Pr[x\not\in(\mathcal{T}\backslash T_i)]} = \frac{\Pr[x\in C]}{\Pr[x\not\in(\mathcal{T}\backslash T_i)]} \leq \frac{\Pr[x\in C]}{1-\phi\lambda(\mathcal{T}\backslash T_i)}
\end{align*}
Note that $\phi\lambda(\mathcal{T}\backslash T_i) = \Theta\left(\frac{f(n)}{n}\right)\to 0$ as $n\to \infty$. Hence, we have $\limsup_{n\to \infty} \Pr[x\in C \mid (\land_{j\in J}A_j) \land (\land_{k\in K} B_k)]\leq \limsup_{n\to \infty} \Pr[x\in C]$. Because of $B_i = \#(T_i^u)\geq 1 \land \#(T_i^l)\geq 1 \land \#(T_i)\geq 3$, we conclude $\limsup_{n\to\infty} \Pr[B_i \mid (\land_{j\in J}A_j) \land (\land_{k\in K} B_k)] \leq \limsup_{n\to\infty} \Pr[B_i]$.
\end{proof}

If $pq$ cannot be deleted, all test regions in $\mathcal{T}$ are not certifying. That means for each test region either $A_i$ or $B_i$ happens. 
As these events are negatively correlated by Lemma \ref{lim converging}, we conclude:
\begin{corollary} \label{formular for single WK}
Given an edge $pq$ we have:
\begin{align*}
\limsup_{n\to\infty}\Pr\left[X_n^{pq}=1 \mid \lvert \mathcal{T} \rvert=k\right] &\leq \limsup_{n\to\infty} \prod_{i=1}^k (\Pr\left[A_i\right]+\Pr\left[B_i\right]).
\end{align*}
\end{corollary}

By construction, we know that $p$ and $q$ do not lie in any $T_i$ and do not influence $A_i$ or $B_i$ for any $i$. Therefore, we can rename the vertices and assume that $p=U_{n-1}$, $q=U_n$, discard $p$ and $q$ for simplicity and just consider the remaining $n-2$ vertices in $[0,1]^2$. The next step to bound the probability that one test region is certifying asymptotically by a constant.

\begin{lemma} \label{AS1}
Independent on the position of $T_i$ we have for all $i$:
\begin{align*}
\limsup_{n\to \infty}\Pr\left[A_i\right]<1.
\end{align*}
\end{lemma}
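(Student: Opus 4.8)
The plan is to bound $\Pr[A_i^c]$ from below by an absolute positive constant. Recall that $A_i^c$ is exactly the event that $T_i$ is occupied, i.e.\ that both the upper test region $T_i^u$ and the lower test region $T_i^l$ contain at least one vertex. Since $T_i$ is a canonical test region it lies completely inside the unit square, so $T_i^u,T_i^l\subseteq[0,1]^2$, and, as noted after the definition of $F_{T^u}$, we have $\lambda(T_i^u)=\lambda(T_i^l)=\frac{F_{T^u}}{n}$ with $F_{T^u}>0$ a constant that does not depend on $i$, on $pq$, or on the position of $T_i$ (it only depends on the fixed shape of a test region). Hence, by $\psi$-boundedness, $\Pr[U_j\in T_i^u]\ge\psi\,\lambda(T_i^u)=\frac{a}{n}$ for every vertex $U_j$, where $a:=\psi F_{T^u}>0$, and likewise $\Pr[U_j\in T_i^l]\ge\frac{a}{n}$.

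The naive estimate $\Pr[A_i]\le\Pr[\Empty(T_i^u)]+\Pr[\Empty(T_i^l)]$ gives only a bound of essentially $2e^{-a}$, which need not be below $1$ when $a$ is small; and since a vertex cannot lie in both $T_i^u$ and $T_i^l$, the events $\Empty(T_i^u)$ and $\Empty(T_i^l)$ are negatively correlated, so one cannot simply multiply their probabilities either. To circumvent this, I would artificially decouple the two hitting events: after discarding $p$ and $q$, split the remaining $n-2$ vertices into two disjoint groups of sizes $\lfloor\frac{n-2}{2}\rfloor$ and $\lceil\frac{n-2}{2}\rceil$. Let $E_1$ be the event that some vertex of the first group lies in $T_i^u$ and $E_2$ the event that some vertex of the second group lies in $T_i^l$. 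As the groups are disjoint and the vertices independent, $E_1$ and $E_2$ are independent, and $E_1\cap E_2$ implies that $T_i$ is occupied, so $E_1\cap E_2\subseteq A_i^c$. Therefore, for all sufficiently large $n$,
\begin{align*}
\Pr[A_i^c]\ \ge\ \Pr[E_1]\,\Pr[E_2]\ \ge\ \left(1-\Big(1-\tfrac{a}{n}\Big)^{\lfloor(n-2)/2\rfloor}\right)\left(1-\Big(1-\tfrac{a}{n}\Big)^{\lceil(n-2)/2\rceil}\right).
\end{align*}

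Letting $n\to\infty$, the right-hand side tends to $\big(1-e^{-a/2}\big)^2=:\eta>0$, and this bound is uniform in $i$, in $pq$ and in the position of $T_i$, because $a$ depends only on $\psi$ and the shape of a test region. Hence $\liminf_{n\to\infty}\Pr[A_i^c]\ge\eta$, and so $\limsup_{n\to\infty}\Pr[A_i]\le 1-\eta<1$, which is the claim. The one place that needs care is exactly the decoupling step above: the two emptiness events are negatively correlated and the density lower bound $a$ may be arbitrarily small, so the estimate must be arranged so that it yields a constant strictly below $1$ rather than merely a sum of two small probabilities; passing to disjoint blocks of vertices to restore independence is what achieves this.
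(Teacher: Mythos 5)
Your proof is correct, and it takes a genuinely different route from the paper. The paper's proof computes $\Pr[A_i]$ exactly via inclusion–exclusion, $\Pr[A_i]=\prod_j(1-y_j)+\prod_j(1-z_j)-\prod_j(1-y_j-z_j)$ with $y_j=\int_{T_i^u}d_j\,\mathrm d\lambda$ and $z_j=\int_{T_i^l}d_j\,\mathrm d\lambda$, then shows by a partial-derivative monotonicity argument that this expression only increases when each $y_j,z_j$ is replaced by the lower bound $\psi F_{T^u}/n$, arriving at $\Pr[A_i]\le 2(1-\psi F_{T^u}/n)^{n-2}-(1-2\psi F_{T^u}/n)^{n-2}\to 1-(1-e^{-\psi F_{T^u}})^2<1$. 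You avoid inclusion–exclusion entirely by manufacturing independence: splitting the vertex set into two halves so that the events \enquote{a vertex from the first half lands in $T_i^u$} and \enquote{a vertex from the second half lands in $T_i^l$} are genuinely independent, and their intersection certifies occupation. Your observation that the two emptiness events are negatively correlated, so one cannot simply multiply, is accurate and is precisely what forces either the paper's monotonicity lemma or your block-splitting device. The trade-off: your split halves the effective number of trials, so you arrive at the slightly weaker constant $1-(1-e^{-\psi F_{T^u}/2})^2$ versus the paper's $1-(1-e^{-\psi F_{T^u}})^2$, but since the lemma only asks for some bound strictly below $1$, both suffice, and your argument is arguably more elementary since it needs no computation of partial derivatives over the densities.
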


\begin{proof}
We have
\begin{align*}
\Pr\left[A_i\right]&=\Pr\left[\#(T_i^u)=0\lor \#(T_i^l)=0\right]\\
&=\Pr\left[\#(T_i^u)=0\right]+\Pr\left[\#(T_i^l)=0\right]- \Pr\left[\#(T_i^u)=0\land \#(T_i^l)=0\right]\\
&=\Pr\left[\#(T_i^u)=0\right]+\Pr\left[\#(T_i^l)=0\right]- \Pr\left[\#(T_i^u \cup T_i^l)=0\right]\\
&=\prod_{j=1}^{n-2}{\int_{[0,1]^2\backslash T_i^u}d_j\mathrm{d}\lambda}+\prod_{j=1}^{n-2}{\int_{[0,1]^2\backslash T_i^l}d_j\mathrm{d}\lambda}-\prod_{j=1}^{n-2}{\int_{[0,1]^2\backslash (T_i^u\cup T_i^l)}d_j\mathrm{d}\lambda}.
\end{align*}
Define $y_j \coloneqq \int_{T_i^u}d_j\mathrm{d}\lambda, z_j \coloneqq \int_{T_i^l}d_j\mathrm{d}\lambda$. Using the linearity of the integral and $\int_{[0,1]^2\backslash (T_i^u\cup T_i^l)}d_j\mathrm{d}\lambda+\int_{T_i^u}d_j\mathrm{d}\lambda+\int_{T_i^l}d_j\mathrm{d}\lambda=\int_{[0,1]^2}d_j\mathrm{d}\lambda=1$ we get:
\begin{align*}
\Pr\left[A_i\right]=\prod_{j=1}^{n-2}(1-y_j)+\prod_{j=1}^{n-2}(1-z_j)-\prod_{j=1}^{n-2}(1-y_j-z_j).
\end{align*}
Now note that since $y_j,z_j\geq 0$ we have for all $k$:
\begin{align*}
\frac{\partial}{\partial y_k} \Pr\left[A_i\right]&= -\prod_{j=1,j\neq k}^{n-2}(1-y_j)+ \prod_{j=1,j\neq k}^{n-2}(1-y_j-z_j)\leq 0\\
\frac{\partial}{\partial z_k} \Pr\left[A_i\right]&= -\prod_{j=1,j\neq k}^{n-2}(1-z_j)+ \prod_{j=1,j\neq k}^{n-2}(1-y_j-z_j)\leq 0.
\end{align*}
We conclude that $\Pr\left[A_i\right]$ is monotonically decreasing in $y_k$ and $z_k$. Define $F:=n\lambda(T^u)$, since $\lambda(T^u)=\lambda(T^l)$ scales linearly with $\delta^2=\frac{1}{n}$, $F$ is a constant. Moreover, we have $\lambda(T^u)=\lambda(T^l)=\frac{F}{n}$.
We know that $y_k=\int_{T_i^u}d_k\mathrm{d}\lambda\geq \psi\lambda(T_i^u)=\psi\frac{F}{n}$ and similarly $z_k=\int_{T_i^l}d_j\mathrm{d}\lambda\geq \psi\frac{F}{n}$. Thus

\begin{align*}
\Pr\left[A_i\right]&\leq 2(1- \psi\frac{F}{n})^{n-2}-(1- 2\psi\frac{F}{n})^{n-2}\\
\Rightarrow \limsup_ {n\to \infty}\Pr\left[A_i\right] &\leq \limsup_{n\to \infty}2(1- \psi\frac{F}{n})^{n-2}-(1- 2\psi\frac{F}{n})^{n-2}=2e^{-\psi F}-e^{-2\psi F}\\
&=1-1+2e^{-\psi F}-e^{-2\psi F}=1-(1-e^{-\psi F})^2<1.
\end{align*}
In the last inequality we used that $\psi F>0$.
\end{proof}

\begin{lemma} \label{LeastS1}
Independent on the position of $T_i$ we have for all $i$:
\begin{align*}
\limsup_{n\to \infty} \Pr\left[\#(T_i) \geq 3\mid A_i^c\right]<1.
\end{align*}
\end{lemma}

\begin{proof}
We have
\begin{align*}
\Pr\left[\#(T_i)\geq 3 \mid A_i^c\right]&=\Pr\left[\#(T_i)\geq 3 \mid \#(T_i^u)\geq 1\land \#(T_i^l)\geq 1\right]\\
&=\Pr\left[\#(T_i) \geq 3 \mid \#(T_i)\geq 2\right]\leq \max_{x,y} \Pr\left[\#(T_i)\geq 3 \mid U_x,U_y\in T_i\right]\\
&=\max_{x,y}\Pr\left[\#(T_i) \geq 1 \mid U_x,U_y\not\in T_i\right]\\
&=1-\min_{x,y}\Pr\left[\#(T_i)=0 \mid U_x,U_y\not\in T_i\right]\\
&\leq 1-\min_{x,y}\prod_{j=1,j\neq x,y}^{n-2} \int_{[0,1]^2\backslash T_i}d_j\mathrm{d}\lambda=1-\min_{x,y}\prod_{j=1,j\neq x,y}^{n-2} \left( 1- \int_{T_i}d_j\mathrm{d}\lambda \right)\\
&\leq 1-(1-\phi\lambda(T_i))^{n-4} \leq 1-(1-\phi\frac{4\pi}{n})^{n-4}.
\end{align*}
In the last inequality we bound the area of $T_i$ by the area of the circle with radius $2\delta$. Hence $\limsup_{n\to \infty} \Pr\left[\#(T_i)\geq 3 \mid A_i^c\right] \leq \limsup_{n\to \infty} 1-(1-\phi\frac{4\pi}{n})^{n-4}=1-e^{-4\pi\phi}< 1$.
\end{proof}

\begin{lemma} \label{ep}
There exists a $0<s<1$ such that, independent of the position of the canonical test regions, for all $i$:
\begin{align*}
\limsup_{n\to\infty} \Pr\left[A_i\right]+\Pr\left[B_i\right]=s<1.
\end{align*}
\end{lemma}

\begin{proof} We have:
\begin{align*}
&\limsup_{n \to \infty} \Pr\left[A_i\right]+\Pr\left[B_i\right]\\
= & \limsup_{n \to \infty} \Pr\left[A_i\right]+\Pr\left[A_i^c\right] \Pr\left[\#(T_i) \geq 3 \mid A_i^c\right]\\
=&\limsup_{n\to \infty} 1- (1-\Pr\left[A_i\right]) (1-\Pr\left[\#(T_i)\geq 3 \mid A_i^c\right]).
\end{align*}
We know by Lemma \ref{AS1} and \ref{LeastS1} that there are positive constants $c_1, c_2$ such that $\limsup_{n\to \infty} (1-\Pr\left[A_i\right])>c_1$ and $\limsup_{n\to \infty} \Pr\left[ \#( T_i ) \geq 3 \mid A_i^c\right]<1-c_2$. Hence, we have $\limsup_{n\to \infty} \Pr\left[A_i\right]+\Pr\left[B_i \mid \#( \mathcal{T} \backslash T_i ) =0\right]<1-c_1c_2<1$.
\end{proof}

With this estimate we can now investigate the asymptotic behavior of the number of remaining edges after the edge elimination.

\begin{theorem} \label{Hougardy result}
We have $\E\left[X_n\right]\in \Theta(n)$.
\end{theorem}

\begin{proof}
We have by Corollary \ref{formular for single WK} and Lemma \ref{ep} for large $n$:
\begin{align*}
\Pr\left[X_n^{pq}=1\right]&=\sum_{l=0}^{\infty}\Pr\left[X_n^{pq}=1 \mid \lvert\mathcal{T}\rvert=l\right]\Pr\left[\lvert\mathcal{T}\rvert=l\right]\\
&\leq s^0\Pr\left[\lvert\mathcal{T}\rvert=0\right]+ \sum_{l=\lfloor \eta/8 \rfloor}^{f(n)-1}s^l\Pr\left[\lvert\mathcal{T}\rvert=l\right]+ s^{f(n)}\Pr\left[\lvert\mathcal{T}\rvert=f(n)\right].
\end{align*}
By the construction of the canonical test regions $\Pr\left[\lvert\mathcal{T}\rvert=0\right]$ is the probability that $pq$ has length less than $\eta\delta$ and $\Pr\left[\lvert\mathcal{T}\rvert=l\right]$ for $\frac{\eta}{8} \leq l< f(n)$ is the probability that the random edge $pq$ has length between $8\delta l$ and $8\delta(l+1)$. The first event is equivalent to that $q$ lies in a circle with radius $\eta\delta=\frac{\eta}{\sqrt{n}}$ around $p$. This has probability hat most $\pi\frac{\eta^2}{n}\phi$. The second event is equivalent to $q$ is not lying in the circle with radius $8l\delta=\frac{8l}{\sqrt{n}}$ around $p$ but in the circle with radius $8(l+1)\delta=\frac{8(l+1)}{\sqrt{n}}$ around $p$. This probability is less or equal to $64\pi\frac{2l+1}{n}\phi$, since this is an upper bound for the area of the annulus intersected with $[0,1]^2$ times the upper bound for the density function. 
Thus
\begin{align*}
\E\left[X_n\right]&=\E\left[\sum_{e\in E(K_n)}X^{e}_n\right]=\sum_{e\in E(K_n)}\E\left[X^{e}_n\right]=\sum_{e\in E(K_n)}\Pr\left[X^{e}_n=1\right]\leq \binom{n}{2}\Pr\left[X^{pq}_n=1\right]\\
&\leq \binom{n}{2}\left(\pi\frac{\eta^2}{n}\phi+\sum_{l=0}^{f(n)-1}s^l64\pi\frac{(2l+1)}{n}\phi+s^{f(n)} \right).
\end{align*}

Choose some $f$ satisfying $f(n)\in\omega(\log n)$ and $f(n)\in o(n)$. 
\begin{align*}
&\limsup_{n\to \infty}\frac{\E\left[X_n\right]}{n} =\lim_{n\to \infty} \frac{n-1}{2n} \lim_{n\to \infty} n\left(\pi\frac{\eta^2}{n}\phi+ \sum_{l=0}^{f(n)-1}s^l64\pi\frac{(2l+1)}{n}\phi+s^{f(n)} \right)\\
=&\frac{1}{2}\left(\pi\eta^2\phi+ \lim_{n\to \infty} \sum_{l=0}^{f(n)-1}s^l64\pi(2l+1)\phi+\lim_{n\to \infty}s^{f(n)}n\right)\\
=&\frac{1}{2}\left(\pi\eta^2\phi+ \sum_{l= 0}^{\infty}s^l64\pi(2l+1)\phi\right)\\
<&\infty.
\end{align*}
The last expression is a constant because $\limsup_{l\to \infty} \sqrt[l]{s^l64\pi(2l+1)\phi}=s<1$, hence by the root criterion the series $\sum_{l=0}^{\infty}s^l64\pi(2l+1)\phi$ converges.
Thus, $\E\left[X_n\right]\in O(n)$. On the other hand, we have clearly $\E\left[X_n\right]\in \Omega(n)$, since an optimal \textsc{TSP} tour exists for every instance and consists of $n$ edges which are thus not useless. Therefore, we conclude $\E\left[X_n\right]\in \Theta(n)$.
\end{proof}

\begin{rem}
For a uniformly random distribution of the vertices we get with this proof an upper bound of roughly $3.73\cdot 10^{14} n$ on the expected number of remaining edges. The first summand $\frac{1}{2}\pi\frac{\eta^2}{n}\phi$ from the constant factor is dominated by the second summand $\frac{1}{2}\left(\sum_{l=0}^{f(n)-1}s^l64\pi\frac{(2l+1)}{n}\phi\right)$ as $\eta=100$ suffice for the geometric properties. A straightforward calculation shows that $\lambda(T_i^u)=\lambda(T_i^l)= \frac{\frac{2\pi}{3}-\frac{\sqrt{3}}{2}}{2n}$. Using this we can bound the probability that one test region is not certifying by $s\leq 0.9999992655$. Note that the purpose of this proof is mainly to show the asymptotic behavior and not to optimize the constant.
\end{rem}

\subsubsection*{Acknowledgements}
I would like to express my gratitude to my supervisor Stefan Hougardy for introducing me to this topic, giving advice and suggesting corrections. Furthermore, I want to thank Fabian Henneke, Heiko Röglin and anonymous reviewers for reading this paper and making helpful remarks.

\bibliographystyle{plain}
\bibliography{ProbabilisticAnalysisOfEdgeElimination.bbl}
%\bibliography{myBib}
\end{document}